\newtheorem{lemma}{Lemma}
\newtheorem{proposition}{Proposition}
\newtheorem{theorem}{Theorem}
\begin{document}
\title{Edge Learning with Unmanned Ground Vehicle:\\ Joint Path, Energy and Sample Size Planning}

\author{Dan Liu, Shuai Wang, Zhigang Wen, Lei Cheng, Miaowen Wen and Yik-Chung Wu
\thanks{
This work was supported in part by the National Key R\&D Program of China under Grant No.2019YFF0302601,
in part by the Shenzhen Fundamental Research Program under Grant JCYJ20190809142403596,
in part by the Guangdong Basic and Applied Basic Research Foundation under Grant 2019A1515111140,
in part by the Fundamental Research Funds for the Central Universities under Grant 2019SJ02, and in part by the Open Research Fund from Shenzhen Research Institute of Big Data under Grant 2019ORF01012. (\textit{Corresponding Author: Shuai Wang}).

Dan Liu and Zhigang Wen are with the Beijing Key Laboratory of Work Safety Intelligent Monitoring, Department of Electronic Engineering, Beijing University of Posts and Telecommunications, Beijing 100876, China (e-mail: dandanmessage@bupt.edu.cn, zwen@bupt.edu.cn).

Shuai Wang is with the Department of Electrical and Electronic Engineering, Southern University of Science and Technology, Shenzhen 518055, China, and is also with the Department of Computer Science and Technology, Southern University of Science and Technology, Shenzhen 518055 (e-mail: wangs3@sustech.edu.cn).

Lei Cheng is with the Shenzhen Research Institute of Big Data, Shenzhen, Guangdong, P. R. China (e-mail: leicheng@sribd.cn).

Miaowen Wen is with the School of Electronic and Information Engineering, South China University of Technology, Guangzhou 510641, China (e-mail:
eemwwen@scut.edu.cn).

Yik-Chung Wu is with the Department of Electrical and Electronic Engineering, The University of Hong Kong, Hong Kong (e-mail: ycwu@eee.hku.hk).
}
}
\maketitle

\begin{abstract}
Edge learning (EL), which uses edge computing as a platform to execute machine learning algorithms, is able to fully exploit the massive sensing data generated by Internet of Things (IoT).
However, due to the limited transmit power at IoT devices, collecting the sensing data in EL systems is a challenging task.
To address this challenge, this paper proposes to integrate unmanned ground vehicle (UGV) with EL.
With such a scheme, the UGV could improve the communication quality by approaching various IoT devices.
However, different devices may transmit different data for different machine learning jobs and a fundamental question is how to jointly plan the UGV path, the devices' energy consumption, and the number of samples for different jobs?
This paper further proposes a graph-based path planning model, a network energy consumption model and a sample size planning model that characterizes F-measure as a function of the minority class sample size.
With these models, the joint path, energy and sample size planning (JPESP) problem is formulated as a large-scale mixed integer nonlinear programming (MINLP) problem, which is nontrivial to solve due to the high-dimensional discontinuous variables related to UGV movement.
To this end, it is proved that each IoT device should be served only once along the path, thus the problem dimension is significantly reduced.
Furthermore, to handle the discontinuous variables, a tabu search (TS) based algorithm is derived, which converges in expectation to the optimal solution to the JPESP problem.
Simulation results under different task scenarios show that our optimization schemes outperform the fixed EL and the full path EL schemes.
\end{abstract}

\begin{IEEEkeywords}
Edge learning, Internet of Things, mixed integer nonlinear programming, unmanned ground vehicle.
\end{IEEEkeywords}

\IEEEpeerreviewmaketitle

\section{Introduction}

Edge learning (EL) is an emerging intelligent system that integrates ubiquitous sensing (e.g., camera, Lidar), wireless communication (e.g., cellular network, Wi-Fi), and machine learning (e.g., regression, classification) \cite{el1}. In contrast to single-machine learning systems \cite{ml}, \cite{ml1} {where data needs to be centralized, EL does not have such drawbacks as sensing devices can share resources with each other.
With a growing number of participants, the learning performance of EL can be improved.
Furthermore, the EL system uses edge computing as a platform to train and execute deep learning algorithms.
Therefore, the enormous Internet of Things (IoT) data and computing resources in the network become closer, which significantly reduces the communication latency and loading to other part of the network.
Based on the above reasons, EL is envisioned as a necessary component in the sixth generation (6G) telecommunication system \cite{6g1}, and is believed to accelerate the convergence of many relevant research fields including industry electronics, unmanned systems, wireless communications, machine learning, and automatic control.

\subsection{Background}

The development of EL is mainly based on the following techniques:
\begin{itemize}
\item[1)] \textit{Internet of Things.} The IoT is envisioned as a world-wide network infrastructure composed of numerous embedded devices with sensors and electronics \cite{iot1}, \cite{iot}. In the IoT, the devices are generally wireless-interconnected, and it is expected that the number of connected IoT devices will exceed 50 billion by 2020 \cite{iotdevice}. The gigantic quantities of IoT devices will generate zillions Bytes of data, which consequently leads to a key challenge: The generated massive data require intensive computation and energy resources for signal processing, while the IoT devices are computation and energy limited \cite{iot3}.

\item[2)] \textit{Edge Computing.} To address the challenges in IoT, edge computing (EC) deploys computing servers at the edge of the network, which is in close proximity to the data source. The vast available resources in the edge servers can be leveraged to provide elastic data collection, storage, and processing power to support capability-constrained IoT devices \cite{ec2,ec3}. Compared to the cloud computing paradigm, EC delivers several benefits, including shortened service delay, reduced communication overhead, as well as increased security, and is suitable for the scenarios involving delay-sensitive applications \cite{ec4}, \cite{ec5}.

 \item[3)] \textit{Deep Learning.} Recently, deep learning has been applied to many IoT applications, since it can extract insights for high-quality decision making and trend prediction with the large-scale data generated from enormous amount of IoT devices \cite{dl2,dl3}. In particular, with the growing ubiquity of camera-enabled IoT devices, a huge volume of images are being produced everyday, and convolutional neural networks (CNNs) have been considered as the dominant strategy \cite{cnn2,cnn3}.
However, the CNN models in existing research works are mostly deployed at the cloud, and it is necessary to push the CNN models towards the edge for IoT applications.
\end{itemize}

\subsection{Challenges of EL and Proposed EL-UGV}

In EL systems, there are three major challenges.
\begin{itemize}
\item[1)] \textit{Optimizing for Learning Performance.} In traditional wireless communications, all data are treated equally, and the objective is to maximize the system throughput under a variety of budget constraints (e.g., power, time, bandwidth constraints). In contrast, wireless communication in EL should maximize the learning performance. Therefore, it is necessary to distinguish the value of data in the context of machine learning and transmit high-valued data with priority.

\item[2)] \textit{Pathloss in Wireless Communications.}
In cloud learning systems, different devices are connected via Internet. In contrast, the devices in EL are connected through wireless. Due to signal attenuation during propagation of wireless signals, the performance of EL will be limited by the pathloss and noises of wireless channels.

\item[3)] \textit{Limited Energy at IoT Devices.} Due to the huge volume and the small size, most IoT devices cannot employ large batteries. Therefore, their transmit powers are in the order of mW or even uW, making it difficult to collect their data from far-away base stations.
\end{itemize}

In order to address the above challenges, this paper proposes a novel EL framework that integrates unmanned ground vehicles (UGVs) into EL. The proposed framework is therefore termed EL with UGV (EL-UGV).
The EL-UGV consists of: 1) a sample size planning module that collects more samples for difficult learning jobs and fewer samples for easy jobs; 2) a path planning module that selects the target stopping points and the corresponding moving routes; 3) an energy planning module that automatically finds the best trade-off between moving energy and communicating energy.
The three modules together contribute to a joint path, energy and sample size planning (JPESP) optimization problem that maximizes the learning performance of EL.

However, since the collected data samples could be imbalanced among different classes, existing learning accuracy model in \cite{ao1}, \cite{ao2} is not applicable.
To maximize the learning performance under imbalanced data, the proposed sample size model characterizes F-measure as a function of the minority class sample size.
On the other hand, a graph-based path planning model and a network energy consumption model are used to describe the mobility of UGV and the transmission costs at IoT.
With the proposed models, the JPESP problem turns out to be a large-scale mixed integer nonlinear programming (MINLP) problem, which is nontrivial to solve due to the high-dimensional discontinuous variables related to UGV movement.
To this end, it is first proved that each IoT device should be served only once along the path, thus allowing us to reduce the problem dimension by a factor of $J$ (i.e., the number of UGV stopping points).
Furthermore, to handle the discontinuous variables, a tabu search (TS) based algorithm is derived, which converges to the optimal solution of the JPESP problem.
Simulation results under different task scenarios verify our optimization scheme and show that the performance outperforms other benchmark schemes such as the fixed EL and the full path EL schemes.
To sum up, the main contributions of the paper are listed as follows.
\begin{itemize}
\item First, we consider a novel UGV-enabled mobile EL communication (EL-UGV) system, which supports multiple deep learning tasks and accesses the IoT devices via time division multiple access (TDMA) protocol.
To the best of the authors' knowledge, currently there is no research work focusing on the combination of UGV and EL.
\item Next, considering the collected data samples are usually imbalanced, we use F-measure instead of accuracy as the performance metric to evaluate the EL performance.
Moreover, a novel learning performance model that characterizes F-measure as a function of the minority class sample size is proposed.
\item
  Based on the proposed learning model, we then establish a JPESP problem that is maximizing the minimum F-measure for all tasks by jointly optimizing the UGV path, transmission time, transmission power, and the minimum number of samples among all classes, subject to constraints of communication capacity, total execution time, total energy consumption and the graph mobility.
\item
    Finally, to solve the challenging large-scale MINLP problem, we prove that each IoT device should be served only once along the path and derive the TS-based algorithm to optimize the UGV path.
    Simulation results are presented to validate our analysis.
\end{itemize}

\subsection{Outline}
The rest of this paper is organized as follows. Section II presents the related works on path planning, energy planning, and sample size planning. Section III introduces the system model for the EL-UGV system, which consists of path planning model, energy planning model, and sample size planning model. Section IV proposes the F-measure model and its curve fitting procedure. Section V formulates the JPESP problem, derives the TS-based path design algorithm, and considers the practical implementation. Simulation results are presented in Section~VI and conclusions are drawn in Section VII.

\emph{Notation}. Symbol notations are summarized in Table I.

\begin{table*}[ht]
\caption{Summary of Symbol Notations} 
\centering 
\begin{tabular}{|l|l|l|}
\hline
\textbf{Variable} & \textbf{Description} \\
\hline
$s_{j}\in\{0,1\}$  & $s_{j}=1$ represents the vertex $j$ appears in the routing path; $s_{j}=0$ otherwise.\\
$E_{{r_{j},r_{i}}}\in\{0,1\}$  & $E_{{r_{j},r_{i}}}=1$ represents the edge $(r_{j},r_{i})$ being involved in the path; $E_{{r_{j},r_{i}}}=0$ otherwise.\\
$t_{u,j}\in [0,T_j]$ & Time (in $\mathrm{s}$) allocated to device $u$ when UGV is stopping at the vertex $j$.  \\
$p_{u,j}\in [0,P_{max}]$ & Transmit power (in $\mathrm{Watt}$) allocated to device $u$ when UGV is stopping at the vertex $j$. \\
\hline
\textbf{Parameter} & \textbf{Description} \\
\hline
$U$ & Number of IoT devices. \\
$R$ & Number of steps. \\
$M$ & Number of tasks. \\
$C_m$ & Number of classes of image samples for task $m$.  \\
$A$ & The data amount (in $bits$) for each image sample. \\
$(\mathbf{w},\mathbf{b})$ & Parameters of CNN model, where $\mathbf{w}$ is the weights and $\mathbf{b}$ is the biases. \\
$\mathcal{J}$ & Set of $J$ vertices standing for the possible stopping points. \\
$\mathcal{L}$ & Set of directed edges standing for the feasible movement routes. \\
$\mathcal{R}$ & Routing path. \\
$D_{i,j}$ & Distance between vertex $i$ and vertex $j$. \\
$(\gamma_{1},\gamma_{2})$ & Parameters related to the weight of UGV. \\
$v$ & Constant speed (in $\mathrm{m/s}$) of the UGV. \\
$T_j$ & Duration when UGV stops at vertex $j$. \\
$P_{max}$ & Maximum transmit power at devices. \\
$\epsilon\in \left(0,1\right)$ & Hyper-parameter that balances the energy consumption between UGV and devices. \\
$h_{u,j}$ & Uplink channel from device $u$ to UGV. \\
$\mathcal{G}_{c,m}$ & The set of all devices transmitting the samples for class $c$ of task $m$. \\
$({\theta}_{1,m},{\theta}_{2,m},{\theta}_{3,m})$ & Parameters to be determined by the data sets and CNN model. \\
$T_{all}$ & Completion time (in $\mathrm{s}$) including UGV moving and data transmission. \\
$E_{all}$ & Energy budget (in $\mathrm{Joule}$). \\
${\sigma}^{2}$ & Receiver additive white Gaussian noise (AWGN) power noise power (in $\mathrm{Watt}$). \\
\hline
\end{tabular}
\end{table*}

\section{Related Works}
\subsection{Path Planning}
For UGV \cite{ugv1}, due to the limitation of energy resources and the hostility of wireless channel, its path greatly influences the efficiency of data gathering and consequently affects the system performance. While the path planning problem is extremely difficult to solve as it is NP-hard \cite{path1,path2,path3}, fortunately, the meta-heuristic search algorithm is a promising way to handle the problem \cite{tmh2}, and several meta-heuristic algorithm based methods have been presented to design UGV's optimal path so far. Adam \emph{et al.} \cite{up1} proposed a modified particle swarm optimizer (PSO) to heuristically solve the Solar-Powered UGV motion planning problem that minimizes the travel time while considering the net energy constraint. Mohammad \emph{et al.} \cite{up2} developed an ant colony optimization strategy called green ant for UGV path planning that jointly considers the UGV energy consumption. You-Chiun \emph{et al.} \cite{up3} used a local search based efficient path planning algorithm for mobile sink to realize reliable data gathering. However, no methodology of UGV's path planning has yet been considered in EL systems.

\subsection{Energy Planning}
An important issue for wireless communication systems is energy planning. So far researchers have applied joint time and power optimization approaches and energy minimization techniques. For example, Qingqing \emph{et al.} \cite{enp1} jointly optimized time allocation and power control for maximization of the weighted sum of the user energy efficiencies for wireless powered communications system. Following this, for the same system, Qingqing \emph{et al.} \cite{enp2} maximized the weighted sum rate of the devices in the consideration of downlink and uplink time allocation and the power control at the devices. Bin \emph{et al.} \cite{enp3} proposed a joint time scheduling and power allocation scheme to optimize the sum-rate for relay assisted batteryless IoT Networks. Qizhong \emph{et al.} \cite{enp4} minimized overall downlink and uplink energy consumption in WET-enabled Networks. Thang X \emph{et al.} \cite{enp5} considered a energy minimization technique for cache-assisted content delivery networks with wireless backhaul. However, there have been no detailed studies about energy planning for UGV-based EL, taking into account the motion energy and communication energy.

\subsection{Sample Size Planning}
For sample size planning, the overall/average accuracy (OA/AA) is commonly used to assess the effect of the dataset size on learning performance. For example, Junghwan \emph{et al} \cite{ao3} employed the AA to exam learning performance at given training sample size in CNN-based medical image classification systems. However, in practice the collected data are typically imbalanced (i.e., the number of training samples in one class may be much less than that in another), and OA/AA is not suitable for this case since it neglects the minority class and is strongly biased towards the majority class, which leads to misleading conclusions \cite{nonao1}, \cite{nonao2}. F-measure is a more favorable and meaningful performance measure than OA/AA for the imbalanced data and has been widely used in classification performance of CNN models in recent works. For example, Guanbin \emph{et al}. \cite{fm1} employed F-measure to evaluate the performance of the proposed multiscale deep CNN features based visual saliency model. Minghui \emph {et al}. \cite{fm2} investigated a CNN-based end-to-end trainable fast scene text detector, i.e., TextBoxes++, and used F-measure as the performance measure to verify its accuracy and efficiency. Prashant \emph{et al}. \cite{fm3} proposed a compact end-to-end deep CNN for moving object detection named MSFgNet, and used F-measure to evaluate its performance.  However, very limited research has been focused on F-measure in EL system.

\section{System Model}
\begin{figure*}
\centering
\includegraphics[width=182mm]{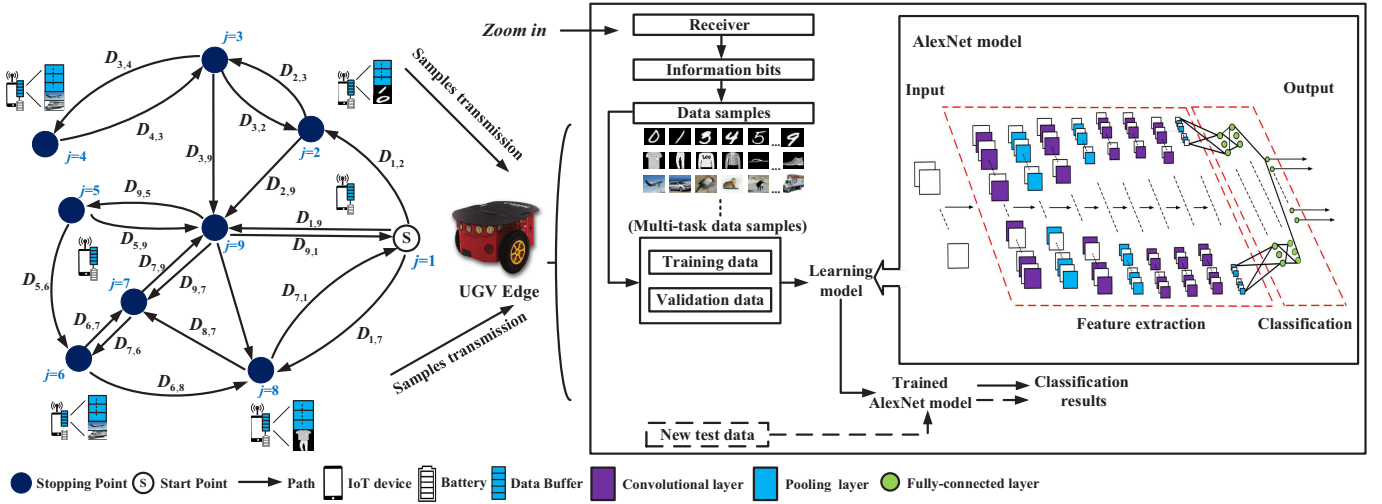}
\caption{The EL-UGV system model.}
\label{fig:}
\end{figure*}

We consider an UGV-enabled mobile EL communication (EL-UGV) system shown in Fig.~1, which consists of a surveillance UGV and $U$ IoT devices.
In particular, the goal of the UGV is to train $m$ classification models for object recognition by collecting $C_m$ classes of image samples observed at the $U$ devices (e.g., camera sensors).
It is assumed that each device transmits one class of images for one learning task and the samples from each device are independently and identically distributed.
In order to achieve a satisfactory learning performance, deep learning technique (e.g., AlexNet
 shown in Fig. 1), which uses multiple layers of nonlinear processing units such that the end-to-end model matches the data very well, is adopted.
However, the samples for tuning the deep learning models' parameters are generated from IoT devices that could be far away from the UGV.
Therefore, the UGV needs to complete two steps: 1) move around to approach various devices and 2) collect samples to learn from the data.
Below, we present the details of the two steps.
\subsection{Path Planning Model}

To model the mobility of UGV, a directed graph ($\mathcal{J}$,$\mathcal{L}$) is adopted, where $\mathcal{J}=\left\{1,...,J \right\}$ denotes the set of $J$ vertices standing for the possible stopping points, and $\mathcal{L}$ denotes the set of directed edges standing for the feasible movement routes.
Based on this graph, the UGV needs to determine a routing path $\mathcal{R}=(r_1,...,r_R)$ with $r_i\in\mathcal{J}$ for any $i=1,\cdots,R$ and $(r_i,r_{i+1})\in\mathcal{L}$ for any $i=1,\cdots,R-1$, where $R$ stands for the number of steps to be taken.
This path can be equivalently described by a vertex selection variable $\mathbf{s}=[s_1,\cdots,s_J]^T$ and a path selection matrix
\begin{align}
&\mathbf{E}=
\left[
\begin{matrix}
&E_{1,1} & \cdots & E_{1,J} &\\
& \cdots  & \cdots & \cdots &\\
&E_{J,1} & \cdots & E_{J,J} &\\
\end{matrix}
\right],
\end{align}
where we use $s_j=1$ if vertex $j$ appears in the $\mathcal{R}$ and $s_j=0$ otherwise, and ${E}_{r_{i},r_{i+1}}=1$ with $i=1,\cdots,R-1$ and zero otherwise.
The vertex section variable and the path selection matrix should satisfy $\mathbf{E}\in \mathcal{Q}(\mathbf{s})$, where
\begin{subequations}
\begin{align}
\mathcal{Q}(\mathbf{s})=\Bigg\{\mathbf{E}:&
\sum_{i=1}^JE_{{r_{j},r_{i}}}=\sum_{i=1}^JE_{r_{i},r_{j}}=s_{j},\,\forall j, \label{2a}\\
&{\eta}_i-{\eta}_j+\left(\sum_{r=1}^Js_r-1\right)E_{{r_{j},r_{i}}}
\nonumber\\
&+\left(\sum_{r=1}^Js_r-3\right)E_{{r_{i},r_{j}}}
\nonumber\\
&
\leq
\sum_{r=1}^Js_r-2+J_0(2-s_i-s_j),\nonumber\\
&\forall {i,j}\geq 2, \, i \neq j, \label{2b}\\
&s_{j}\leq {\eta}_j\leq \left(\sum_{r=1}^Ps_r-1\right)s_{j}, \, \forall j\geq 2, \label{2c}\\
&E_{{r_{j},r_{i}}}\in\left\{0,1\right\}, \, \forall {i,j}, \quad E_{{r_{j},r_{j}}}=0, \, \forall j, \label{2d}\\
&s_1=1, \, s_j\in\left\{0,1\right\}, \, \forall {j}\geq 2 \label{2e} \Bigg\}.
\end{align}
\end{subequations}

The constraint (2a) guarantees that each vertex in the routing path must have one edge pointing toward it and the other edge pointing away from it. The constraints (2b) and (2c) are subtour elimination constraints, which guarantee that the path must be connected. The constraint (2d) represents whether edge $(r_{j},r_{i})$ from candidate $r_{j}$ to $r_{i}$ belongs to the routing path, and constraint (2e) represents whether candidate vertex $j$ appears in the routing path. Notice that $\{{\eta}_j\}$ represent slack variables to guarantee a connected trajectory and $J_0$ is a constant set to $J_0=10^6$.

\subsection{Energy Planning Model}

Having specified the path planing model, the next step is to model the energy consumption, which consists of two parts, i.e., UGV motion energy and users' transmission energy.
To compute UGV motion energy, the moving distance is computed as $\mathrm{Tr}(\mathbf{D}^T\mathbf{E})$, where $D_{i,j}$ is the distance between vertex $i$ and vertex $j$\footnote{Notice that $D_{i,j}=0$ if $i=j $ and $D_{i,j}=+\infty$ if there is no feasible route between $i$ and $j$.}.
Therefore, the total motion energy in Joule is given by
\begin{equation}\label{6}
\Theta(\mathbf{E})=\underbrace{\mathrm{Tr}(\mathbf{D}^T\mathbf{E})}_{\mathrm{Distance}}\times
\left(\frac{\gamma_{1}}{v}+\gamma_{2}\right),
\end{equation}
where $v$ is the UGV speed in $\mathrm{m/s}$, and $(\gamma_{1},\gamma_{2})$ are parameters related to the weight of UGV (e.g., for the considered Pioneer 3DX robot, we have $\gamma_{1}=0.29$ and $\gamma_{2}=7.4$ \cite[Sec. IV-C]{3dx}.
Accordingly, the total motion time of UGV along path $\mathcal{R}$ can be expressed as
\begin{equation}\label{5}
t_{\mathrm{UGV}}(\mathbf{E})=\frac{\mathrm{Tr}(\mathbf{D}^T\mathbf{E})}{v}.
\end{equation}

On the other hand, without loss of generality, we suppose that the UGV stops at a particular vertex $j\in\mathcal{R}$ for a duration of $T_j$.
Out of this $T_j$, the UGV allocates $t_{u,j}\leq T_j$ to collect samples from device $u\in[1,U]$, and the transmit power at the IoT device $u$ for uploading its data is denoted as
$p_{{u},{j}}$ with $p_{{u},{j}}\leq P_{\mathrm{max}}$, where $P_{\mathrm{max}}$ is the maximum transmit power of the devices.
As a result, the total transmission energy at IoT is given by $\sum_{j=1}^J\sum_{u=1}^Up_{{u},{j}}t_{{u},{j}}$.
The total network energy is
\begin{align}
&\Psi\left(\mathbf{E},\{t_{u,j},p_{u,j}\}\right)=\sum_{j=1}^J\sum_{u=1}^Up_{{u},{j}}t_{{u},{j}}+\epsilon\,\Theta(\mathbf{E}),
\end{align}
where $\epsilon\in \left(0,1\right)$ is a hyper-parameter that weights the energy consumption at UGV and devices.

\subsection{Sample Size Planning Model}

Having specified the path and energy models, the next step is to model the sample size planning procedure.
In particular, the number of samples $x_{u,j}$ being transmitted by device $u$ at vertex $j$ is proportional to the transmission time, and is also determined by the quality of wireless channel for device $u$.
By adopting the Shannon capacity theorem, $x_{u,j}$ is upper bounded as:
\begin{align}\label{4}
&x_{u,j}\leq
t_{u,j}\times B\,\mathrm{log}_2\left(1+s_j\times\frac{p_{{u},{j}}\left|h_{{u},{j}} \right|^2}{{\sigma}^{2}}\right)
\Big/A,
\end{align}
where $B$ is the bandwidth available for the system, $h_{{u},{j}}\in\mathbb{C}$ is the uplink channel from device $u$ to the UGV, ${\sigma}^{2}$ is the AWGN noise power, and $A$ is the data amount in bits for each training sample.
Notice that the channel condition $\{h_{{u},{j}}\}$ can be pre-determined using ray-tracing methods \cite{ray}.

Based on the above sample size planning model, the total number of samples $y_{c,m}$ obtained for training the class $c$ of task $m$ is
\begin{align}\label{4}
y_{c,m}=\sum_{u\in\mathcal{G}_{c,m}}\sum_{j=1}^Jx_{u,j},
\end{align}
where $\mathcal{G}_{c,m}$ denotes the set of all devices transmitting the samples for class $c$ of task $m$, with $c=1,\cdots,C_m$ and $m=1,\cdots,M$.

\section{Modeling Learning Performance}

\subsection{F-Measure Model}
With the path planning, energy planning and sample size planning models, the remaining question is how to model the learning performance of classifiers.
While existing works mostly focus on the OA/AA, these metrics are not applicable to the considered UGV edge deep learning system.
This is because the OA/AA merely reflects the overall/average classification performance of all classes, and ignores the performance of the minority class (i.e., the class having the minimum number of training samples).
But since the UGV moves according to the planned path, it is close to a part of devices while far from the others.
As a result, the number of samples can vary significantly for different classes, leading to imbalanced learning performance.

To account for the imbalanced samples, with the confusion matrix shown in Table II presented in Appendix A, the F-measure $\Phi_m$ is defined as:
\begin{equation}
\Phi_m=\frac{2P_mR_m}{P_m+R_m}. \label{Fmeasure}
\end{equation}
where $P_m,R_m$ represents precision and recall, respectively, and they are given by
\begin{equation}\label{7}
P_m=\frac{\mathrm{TP}_m}{\mathrm{TP}_m+\mathrm{FP}_m},
\end{equation}
\begin{equation}\label{8}
R_m=\frac{\mathrm{TP}_m}{\mathrm{TP}_m+\mathrm{FN}_m}.
\end{equation}

Now, to model the relationship between the samples size $\{y_{c,m}\}$ and the F-measure, it is necessary to develop a F-measure model.
However, to the best of the authors' knowledge, currently there is no analytical model for predicting the F-measure given a certain dataset.
To this end, based on the definition in (8), it is observed that the F-measure has the following features:
\begin{itemize}

\item $\Phi_m$ is a monotonically increasing function of $\mathrm{TP}_m$, which is proportional to the minimum number of samples $\mathrm{min}_c\,y_{c,m}$ among all classes.

\item $\Phi_m$ is bounded as $0\leq \Phi_m\leq 1$ as it represents a percentage.

\item As $\mathrm{min}_c\,y_{c,m}$ increases, $\nabla\Phi_m$ would decrease and approach zero if $\mathrm{min}_c\,y_{c,m}\rightarrow+\infty$.
\end{itemize}

Based on the above features, this paper proposes a nonlinear F-measure model as:
\begin{equation}\label{model}
\Phi_{m}\left(\{y_{c,m}\}\right)=
{\theta}_{1,m}\times \left(\mathop{\mathrm{min}}_{c=1,\cdots,C_m}\,y_{c,m}\right)^{{\theta}_{2,m}}+{\theta}_{3,m},
\end{equation}
where $({\theta}_{1,m},{\theta}_{2,m},{\theta}_{3,m})$ are parameters to be determined by the data sets and deep leaning model. Particularly, ${\theta}_{2,m}$ is expected to be positive, parameters ${\theta}_{1,m}$ and parameters ${\theta}_{2,m}$ have the same sign, especially, when ${\theta}_{1,m}>0$, $0<{\theta}_{2,m}<1$.
Notice that in contrast to the learning accuracy model in \cite{ao3}, the proposed model in \eqref{model} adopts $\mathop{\mathrm{min}}_{c=1,\cdots,C_m}\,y_{c,m}$ as the input of function, thus allowing us to analyze the imbalanced data set.

\subsection{Curve Fitting of F-measure}

To verify the proposed model \eqref{model}, we use the AlexNet \cite{alex} to fulfill two deep learning tasks, and then fit the corresponding experimental data to the model \eqref{model}.
More specifically, our first task is to classify $10$ different categories (including airplane, birds, and etc.) in the CIFAR-10 dataset consisting of $60,000$ $32\times32$ color labeled images.
On the other hand, our second task is to classify $10$ different categories (including T-shirts, trousers, and etc.) in the Fashion-MNIST dataset consisting of $70,000$ $28\times28$ gray-scale labeled images.
Without loss of generality, the class ``airplane'' in CIFAR-10 and the class ``T-shirt'' in Fashion-MNIST are chosen as the minority class, respectively.
In particular, we consider the minority classes (i.e., airplane and T-shirt) ranges from $50$ to $2600$, while the number of samples for the other classes is set to $2600$.
In both experiments, for training the AlexNet model, we take $80\%$ of the data samples as training data to learn the parameters $(\mathbf{w},\mathbf{b})$, where $\mathbf{w}$ and $\mathbf{b}$ are weights and biases of the model respectively, and take the remaining $20\%$ as the validation data to finetune hyperparameters and guide the design of proper network architectures. Accordingly, we use a batch size of $32$, a weight decay of $0.0005$, a momentum rate of $0.9$, and a learning rate of $0.005$ for $10000$ iterations.
Furthermore, graphic processing units (GPUs) are used in order to speed up the training procedure.

Under the above settings, the F-measure versus the number of samples for the minority class (i.e., the red and blue solid line, and the red and blue points) is shown in Fig. 2 (a). The corresponding coefficients of the F-measure model for task 1 is given by ${\theta}_{1,1}=-3.742$, ${\theta}_{2,1}=-0.3957$ and ${\theta}_{3,1}=1.04$. And that for task 2 is given by ${\theta}_{1,2}=-0.9465$, ${\theta}_{2,2}=-0.3852$ and ${\theta}_{3,2}=0.955$.
It can be seen that the proposed nonlinear F-measure model (i.e. red and blue solid line) matches the experimental data (i.e., red and blue points) very well, no matter we use the CIFAR-10 or Fashion-MNIST dataset.
On the other hand, all curves increase significantly with the the number of samples for the minority class.
This corroborates the discussions in Section II-C that the F-measure is determined by the minority class instead of the average sample size (notice that all the other classes have fixed number of samples).

\begin{figure}[!t]
\centering
\includegraphics[width=85mm]{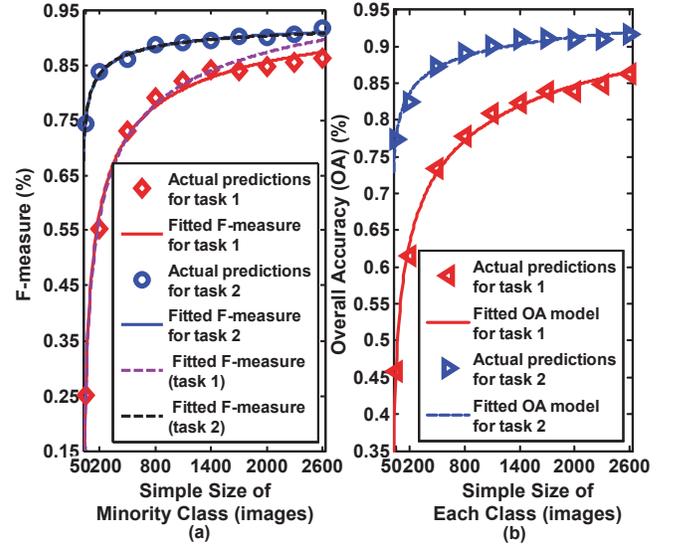}
\caption{The experimental results for the CIFAR-10 and Fashion-MNIST datasets. For both experiments, the test dataset contains 1000 untrained images, each class has 100 images. (a) is the F-measure for the CIFAR-10 and Fashion-MNIST datasets.
For extrapolation, the number of historical samples for the minority class varies from $50$ to $1700$. (b) is the experimental results of OA for the CIFAR-10 and Fashion-MNIST datasets.}
\label{fig_sim}
\end{figure}

\textbf{Remark.}
One may also adopt the OA model when the numbers of samples for different classes of task $m$ are the same:
\begin{equation}\label{12}
\Phi_m(\{y_{c,m}\})=\zeta_{1,m}y_{1,m}^{\zeta_{2,m}}+\zeta_{3,m},
\end{equation}
where $\{\zeta_{1,m},\zeta_{2,m},\zeta_{3,m}\}$ are parameters to be fitted. To verify the proposed model (12), we consider the same learning model and hyper-parameters in Section IV-B. By ranging the number of samples for each class from $50$ to $2600$, we compute the OA $\Xi_m$ as:
\begin{equation}
\Xi_m=\frac{\mathrm{TP}_m+\mathrm{TN}_m}{\mathrm{TP}_m+\mathrm{TN}_m+\mathrm{FP}_m+\mathrm{FN}_m}, \label{Fmeasure}
\end{equation}
where $\{\mathrm{TP}_m,\mathrm{TN}_m,\mathrm{FP}_m,\mathrm{FN}_m\}$ are defined in Table II. By fitting the model (12) to the experimental data, the parameters in the OA model are given by $(\zeta_{1,1},\zeta_{2,1},\zeta_{3,1})=(-1.64,-0.1803,1.263)$ for task 1, and $(\zeta_{1,2},\zeta_{2,2},\zeta_{3,2})=(-0.5914,-0.1927,1.05)$ for task 2. The OA versus the number of samples for each class is shown in Fig. 2 (b). It can be seen that the proposed OA model matches
the experimental data of both tasks very well.
\section{Joint Path, Energy and Sample Size Planning}
\subsection{Problem Formulation}
With the F-measure model, we establish a JPESP problem focusing on a max-min fairness F-measure design. The considered problem is to maximize the minimum F-measure $\Phi_1,...,\Phi_M$ for all tasks by planning the UGV path (including vertex selection variable $\mathbf{s}$ and selection matrix $\mathbf{E}$), scheduling the wireless resources (including transmission time $\{t_{u,j}\}$ and transmission power $\{p_{u,j}\}$), and the minimum number of samples among all classes $\{\mathrm{min}_cy_{c,m}\}$, subjecting to constraints of communication capacity, total execution time, total energy consumption and the graph mobility. To this end, an optimization problem is formulated as:
\begin{subequations}
\begin{align}
\mathrm{P}1:
\mathop{\mathrm{max}}\limits_{\substack{\mathbf{s},\mathbf{E},\{\alpha_{m}\}\\\{t_{{u},{j}},p_{{u},{j}}\}}}
\quad&\mathop{\mathrm{min}}_{\forall m}\Bigg({\theta}_{1,m}\alpha_{m}^{{\theta}_{2,m}}
+{\theta}_{3,m}\Bigg) \nonumber
\\\mathrm{s.t.}\quad\quad&
\alpha_{m}\leq\sum_{u\in\mathcal{G}_{c,m}}\sum_{j=1}^Jt_{u,j}\frac{B}{A}
\nonumber\\
&\mathrm{log}_2\left(1+s_j\frac{p_{{u},{j}}\left|h_{{u},{j}}\right|^2}{{\sigma}^{2}}\right),\quad \forall {c,m}, \label{P1a}
\\
&\mathbf{E}\in\mathcal{Q}(\mathbf{s}), \label{P1b}
\\
&\sum_{j=1}^J\sum_{u=1}^Ut_{{u},{j}}+\frac{\mathrm{Tr}(\mathbf{D}^T\mathbf{E})}{v}\leqslant T_{all}, \label{P1c}
\\
&\Psi\left(\mathbf{E},\{t_{u,j},p_{u,j}\}\right)\leqslant E_{all}, \label{P1d}
\\
&(1-s_{j})t_{{u},{j}}=0,\quad \forall {u,j}, \label{P1e}
\\
&t_{{u},{j}}\geq 0, \quad p_{{u},{j}}\geq 0,\quad \forall {u,j}, \label{resource}
\end{align}
\end{subequations}
where $\alpha_{m}=\mathrm{min}_c~y_{c,m}$.

\begin{itemize}
\item
The constraint \eqref{P1a} is the capacity limitation; it means that the number of training samples $\alpha_m$ should not be greater than the number of collected samples.
\item
The constraint \eqref{P1b} is the mobility management constraint; it describes the constraint of graph-based path \cite{graphconstraint}.
\item
The constraint \eqref{P1c} ensures that the execution including UGV moving and data transmission to be completed in $T_{all}$ seconds \cite{timeconstraint}.
\item
The constraint (14d) guarantees that the total energy consumption of the UGV and the devices cannot exceed the energy budget $E_{all}$ \cite{ugv1},\cite{timeconstraint}.
\item
The constraint \eqref{P1e} ensures that if the vertex $j$ is not visited, the transmission time $\{t_{{u},{j}}\}$ must be zero.
\item
The constraint (14f) is the non-negativity constraints on the optimization variables.
\end{itemize}

\indent However, problem $\mathrm{P}1$ is a large-scale MINLP problem, which is nontrivial to solve due to the discontinuity of $(\mathbf{s},\mathbf{E})$.
While a naive brute-force search can obtain the global optimal solution, it involves exponential computational complexities.
On the other hand, existing algorithms such as local search \cite{ls} may lead to local optimum or on plateaus where many solutions are equally fit.
To this end, in the following, we will derive an efficient tabu search (TS) algorithm based UGV's path planning method that converges to the optimal solution to $\mathrm{P}1$ with high probability.
\subsection{Tabu Search Algorithm}

To facilitate the derivation of TS-based algorithm, we need to reformulate $\mathrm{P}1$ into a compact form.
By defining
\begin{equation}\label{12}
F_{{u},{j}}=\frac{\left|h_{{u},{j}} \right|^2}{{\sigma}^{2}},
\end{equation}
and a function
\begin{align}
\Omega(\mathbf{s})=\Bigg
\{\mathop{\mathrm{min}}_{\forall m}\Bigg({\theta}_{1,m}\alpha_m^{{\theta}_{2,m}}
+{\theta}_{3,m}\Bigg):\eqref{P1a}-\eqref{resource} \Bigg\},  \label{Omega}
\end{align}
problem $\mathrm{P}1$ can be transformed into an equivalent problem only involving $\mathbf{s}$:
\begin{subequations}
\begin{align}
\mathrm{P}2:
\mathop{\mathrm{max}}\limits_{\mathbf{s}}~~
&\Omega(\mathbf{s})\nonumber\\
\mathrm{s.t.}~~~&s_1=1, \, s_j\in \{0,1\}, \, \forall j\geq 2
\end{align}
\end{subequations}

Now, in order to address the challenge brought by the discontinuity of $\mathbf{s}$ in $\mathrm{P}2$, this section will consider a meta-heuristic algorithm termed TS \cite{ts1}, which is based on local search and a tabu list $\mathcal{I}$.
The idea of TS is to avoid getting stuck at a local optimal point by maintaining a tabu list $\mathcal{I}$.
The tabu list $\mathcal{I}$ records the obtained local solutions that have tabu status and cannot be visited currently (but may be visited after a chosen number of iterations).
Moreover, the tabu restriction of a solution can be overridden if the solution passes an aspiration level $\mu$, where $\mu$ is the objective value of the best solution that has been found so far \cite{ts3}.

Notice that the maximum tabu list size $L$ of $\mathcal{I}$ will affect the convergence speed and performance of TS.
More specifically, a larger $L$ will prune more solutions, which results in slow convergence of TS.
On the other hand, a smaller $L$ will cause cycling of solutions, which results in suboptimal performance of TS.
Therefore, the $L$ is a hyper-parameter that needs to be tuned carefully (e.g., $L=10$ in this paper).
Finally, given the list size $L$, the list content is not fixed, meaning that it can either add or revoke an element based on the new solution produced in each iteration.

Based on the above descriptions, the TS algorithm for solving $\mathrm{P}2$ is as follows.
\begin{itemize}
\item \textbf{Step 1.} Generate an initial solution $\mathbf{s}_0$ (e.g., $\mathbf{s}_0=[1,0,\ldots,0]^T$), and set the current solution $\mathbf{s}_{c}$ and the best solution $\mathbf{s}^\star$ as $\mathbf{s}_{c}=\mathbf{s}^\star=\mathbf{s}_0$.

\item \textbf{Step 2.} Select a neighboring solution within $\mathcal{N}(\mathbf{s}_{c})$, which is the non-tabu solution yielding the highest objective function value $\Omega$ or the tabu solution passing the aspiration level $\mu$. To reduce the computational complexity, this step can be implemented via random sampling.

\item \textbf{Step 3.} Refresh the tabu list as well as $\mathbf{s}^\star$.

\item \textbf{Step 4.} Terminate the process if the maximum number of iterations $\rm{\overline{iter}}$ has been reached (or no improvement in $\mathbf{s}^\star$ has been found after a maximum number of iterations) and set optimal $\mathbf{s}^{*}=\mathbf{s}^\star$. Otherwise go to \textbf{Step 2}.
\end{itemize}
Notice that $\mathcal{N}(\mathbf{s}_{c})$ in \textbf{Step 2} is the neighborhood of $\mathbf{s}_{c}$.
Since $\mathbf{s}$ is a binary variable, a natural $\mathcal{N}(\mathbf{s}_{c})$ can be defined as:
\begin{equation}
\mathcal{N}(\mathbf{s}_{c})=\left\{\mathbf{s}:\|\mathbf{s}-\mathbf{s}_c\|_0\leq Z,\ \mathbf{s}\in\Pi\right\},
\end{equation}
where $Z$ is the size of its neighborhood and $\Pi$ is entire solution space.
The detailed description of the TS is shown in \textbf{Algorithm 1}.

\begin{algorithm}[!t]
\caption{Solve JPESP proplem via TS}
\begin{algorithmic}[1]
\State {\bf Initialize} $\rm{iter}=0$ and $\mathcal{I}=\emptyset$. Set $L=10$ and $Z=3$.
\State {\bf Initialize} $\mathbf{s}_{c}=\mathbf{s}^{\star}=\mathbf{s}_0=[1,0,\ldots,0]^T$.
\State {\bf Repeat}:
\State \quad Generate a set $\mathcal{A}$ with sufficiently large $|\mathcal{A}|$ and each element randomly sampled from $\mathcal{N}(\mathbf{s}_{c})$.
\State \quad {\bf For} all $\mathbf{x}^{[n]}$ in $\mathcal{A}$:
\State \quad\quad Solve $\mathrm{P}3$ with $\mathbf{s}=\mathbf{x}^{[n]}$ to obtain $\mathbf{E}^{[n]}$.
\State \quad\quad Solve $\mathrm{P}6$ with $\mathbf{s}=\mathbf{x}^{[n]}$ and $\mathbf{E}^{[n]}$ to obtain $\{\alpha^{[n]}_m,t^{[n]}_{u,j},p^{[n]}_{u,j}\}$.
\State \quad\quad Compute $\Omega(\mathbf{x}^{[n]})$ with $\mathbf{x}^{[n]}$, $\mathbf{E}^{[n]}$, $\{\alpha^{[n]}_m,t^{[n]}_{u,j},p^{[n]}_{u,j}\}$.
\State \quad Permute $\mathcal{A}$ into $\mathcal{B}=\{\mathbf{x}^{[1]},\cdots,\mathbf{x}^{[|\mathcal{A}|]}\}$ such that $\Omega(\mathbf{x}^{[1]})\geq \Omega(\mathbf{x}^{[2]})\geq \cdots \Omega(\mathbf{x}^{[|\mathcal{A}|]})$.
\State \quad {\bf For} $n=1$ to $|\mathcal{A}|$:
\State \quad \quad If $\Omega(\mathbf{x}^{[n]})>\Omega(\mathbf{s}^{\star})$: (\emph{Update Solution})
\State \quad \quad \quad Set $\mathbf{s}^{\star}\leftarrow \mathbf{x}^{[n]}$. (\emph{Update UGV visiting points})
\State \quad \quad \quad Set $\mathbf{E}^{\star}\leftarrow \mathbf{E}^{[n]}$. (\emph{Update UGV path})
\State \quad \quad \quad Set $\{\alpha^{\star}_m\leftarrow \alpha^{[n]}_m\}$. (\emph{Update sample size})
\State \quad \quad \quad Set $\{t^{\star}_{u,j}\leftarrow t^{[n]}_{u,j}\}$. (\emph{Update transmission time})
\State \quad \quad \quad Set $\{p^{\star}_{u,j}\leftarrow p^{[n]}_{u,j}\}$. (\emph{Update transmission energy})
\State \quad \quad {\bf If} $\mathbf{x}^{[n]}\notin\mathcal{I}$: (\emph{Update Tabu List})
\State \quad \quad \quad {\bf If} $|\mathcal{I}|<L$:
\State \quad \quad \quad \quad Update $\mathcal{I}=\mathcal{I}\bigcup\{\mathbf{x}^{[n]}\}$ and set $\mathbf{s}_{c}\leftarrow \mathbf{x}^{[n]}$.
\State \quad \quad \quad {\bf Else}:
\State \quad \quad \quad \quad Find $\mathbf{z}=\rm{argmin}_{\mathbf{s}\in\mathcal{I}}~\Omega(\mathbf{\mathbf{x}^{[n]}})$.
\State \quad \quad \quad \quad Update $\mathcal{I}=\mathcal{I}\setminus \{\mathbf{z}\}\bigcup\{\mathbf{x}^{[n]}\}$ and set $\mathbf{s}_{c}\leftarrow \mathbf{x}^{[n]}$.
\State \quad \quad \quad {\bf End If}
\State \quad \quad {\bf Else}:
\State \quad \quad \quad {\bf If} $\Omega(\mathbf{x}^{[n]})>\mu$:
\State \quad \quad \quad \quad Update $\mu=\Omega(\mathbf{x}^{[n]})$.
\State \quad \quad \quad \quad Update $\mathcal{I}=\mathcal{I}\setminus\{\mathbf{x}^{[n]}\}$ and set $\mathbf{s}_{c} \leftarrow \mathbf{x}^{[n]}$.
\State \quad \quad \quad {\bf End If}
\State \quad \quad {\bf End If}
\State \quad {\bf End For}
\State \quad Update $\rm{iter}\leftarrow \rm{iter}+1$.
\State {\bf Until}  $\rm{iter}=\rm{\overline{iter}}$.
\State {\bf Set} optimal $\mathbf{s}^{*}=\mathbf{s}^\star$, $\mathbf{E}^{*}=\mathbf{E}^\star$, $\{\alpha^{*}_m=\alpha^{\star}_m\}$, $\{t^{*}_{u,j}=t^{\star}_{u,j}\}$, and $\{p^{*}_{u,j}=p^{\star}_{u,j}\}$.
\State {\bf Output} $\mathbf{s}^{*}$, $\mathbf{E}^{*}$, $\{\alpha^{*}_m,t^{*}_{u,j},p^{*}_{u,j}\}$.
\end{algorithmic}
\end{algorithm}

\subsection{Computing $\Omega$ with Fixed $\mathbf{s}$}

To execute \textbf{Algorithm 1}, the remaining task is to compute $\Omega(\mathbf{s})$ for $\mathbf{s}=\mathbf{s}^{\diamond}$, where $\mathbf{s}^{\diamond}$ is any feasible solution to $\mathrm{P}2$.
Based on the expression of $\Omega(\mathbf{s})$ in \eqref{Omega}, computing $\Omega$ is equivalently to handling the following problem related to the path selection matrix $\mathbf{E}$ and the wireless resources $\left\{t_{{u},{j}},p_{{u},{j}}\right\}$:
\begin{subequations}
\begin{align}
\mathrm{P}3:
\mathop{\mathrm{max}}\limits_{\substack{\mathbf{E},\{\alpha_{m}\}\\\{t_{{u},{j}},p_{{u},{j}}\}}}
\quad&\mathop{\mathrm{min}}_{\forall m}\Bigg({\theta}_{1,m}\alpha_m^{{\theta}_{2,m}}
+{\theta}_{3,m}\Bigg) \nonumber
\\\mathrm{s.t.}\quad\quad&
\alpha_{m}\leq\sum_{u\in\mathcal{G}_{c,m}}\sum_{j=1}^Jt_{u,j}B/A
\nonumber\\
&\mathrm{log}_2\left(1+s^{\diamond}_j\times F_{{u},{j}}{p_{{u},{j}}}\right),  \quad\forall {c,m},\label{P3a}
\\
&\mathbf{E}\in \mathcal{Q}(\mathbf{s}^{\diamond}), \label{P3b}
\\
&(1-s^{\diamond}_{j})t_{{u},{j}}=0,\quad \forall {u,j}, \label{P3c}
\\
&t_{{u},{j}}\geq 0, \quad p_{{u},{j}}\geq 0,\quad \forall {u,j}, \label{P3d}
\\
&\sum_{j=1}^J\sum_{u=1}^Ut_{{u},{j}}+\frac{\mathrm{Tr}(\mathbf{D}^T\mathbf{E})}{v}\leqslant T_{all}, \label{P3e}
\\
&\sum_{j=1}^J\sum_{u=1}^Up_{{u},{j}}t_{{u},{j}}+\epsilon\Theta(\mathbf{E})\leqslant E_{all}. \label{P3f}
\end{align}
\end{subequations}

To solve $\mathrm{P}3$, the first challenge comes from the discrete constraint of $\mathbf{E}$ in \eqref{P3b}.
Fortunately, it can be seen that the variable $\mathbf{E}$ is only involved in \eqref{P3b}, \eqref{P3e} and \eqref{P3f}.
Furthermore, decreasing $\mathrm{Tr}(\mathbf{D}^T\mathbf{E})$ in \eqref{P3e} would also decrease $\Theta(\mathbf{E})$ in \eqref{P3f}, which enlarges the feasible set of $\mathrm{P}3$.
As a result, the optimal solution $\mathbf{E}^\diamond$ to $\mathrm{P}3$ is given by:
\begin{align}
\mathbf{E}^\diamond=\mathop{\mathrm{argmin}}\limits_{\mathbf{E}}~\{\mathrm{Tr}(\mathbf{D}^T\mathbf{E}):\mathbf{E}\in \mathcal{Q}(\mathbf{s}^{\diamond})\}. \label{tsp}
\end{align}

Since the above problem \eqref{tsp} is a travelling salesman problem, $\mathbf{E}^{\diamond}$ can be efficiently found by one-tree relaxation algorithm in the CVX Mosek solver\cite{mosek}. With $\mathbf{E}=\mathbf{E}^\diamond$, the optimization $\mathrm{P}3$ can be transformed into the following problem involving $\left\{t_{{u},{j}},p_{{u},{j}},\alpha_m\right\}$:
\begin{subequations}
\begin{align}
\mathrm{P}4:
\mathop{\mathrm{max}}\limits_{\substack{\left\{\alpha_m\right\},\\ \left\{t_{{u},{j}},p_{{u},{j}}\right\}}}
\quad&\mathop{\mathrm{min}}_{\forall m} \Bigg({\theta}_{1,m}\alpha_m^{{\theta}_{2,m}}
+{\theta}_{3,m}\Bigg) \nonumber
\\\mathrm{s.t.}\quad\quad&
\alpha_m\leq\sum_{u\in\mathcal{G}_{c,m}}\sum_{j=1}^J{t_{{u},{j}}}B\Big/A \nonumber\\
&\mathrm{log}_2\left(1+{{s}^{\diamond}_{j}} F_{{u},{j}}p_{{u},{j}}\right),\quad \forall {c,m}, \label{P4b}\\
&t_{{u},{j}}\geq 0, \quad p_{{u},{j}}\geq 0,\quad \forall {u,j}, \\
&\sum_{j=1}^J\sum_{u=1}^Ut_{{u},{j}}p_{{u},{j}}+\epsilon\Theta(\mathbf{E}^{\diamond})\leqslant E_{all}, \label{P4c}\\
&\sum_{j=1}^J\sum_{u=1}^Ut_{{u},{j}}+\frac{\mathrm{Tr}(\mathbf{D}^T\mathbf{E}^{\diamond})}{v}\leqslant T_{all}, \label{P1d}\\
&(1-s^{\diamond}_{j})t_{{u},{j}}=0,\quad \forall {u,j}.
\end{align}
\end{subequations}

However, the dimension of variables in $\mathrm{P}4$ is $M+2UJ$, which is too large if the number of vertices $J$ is large. To this end, the following proposition is established to reduce the problem dimension.
\begin{theorem}
The optimal $\{t^{\diamond}_{{u},{j}},p^{\diamond}_{{u},{j}}\}$ to \text{P4} satisfies:
\begin{itemize}
\item[(i)] If $t^{\diamond}_{{u},{j}}\neq 0$, then $p^{\diamond}_{{u},{j}}\neq 0$.
\item[(ii)]  If $j=\mathrm{argmax}_{l\in\mathcal{J}}{s}^{\diamond}_{l}F_{{u},{l}}$, then $t^{\diamond}_{{u},{j}}\neq 0$.
\item[(iii)] If $j\neq \mathrm{argmax}_{l\in\mathcal{J}}{s}^{\diamond}_{l}F_{{u},{l}}$, then $t^{\diamond}_{{u},{j}}=0$.
\end{itemize}
\end{theorem}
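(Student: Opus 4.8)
The plan is to show that, at optimality, every IoT device concentrates \emph{all} of its transmission at the single stopping vertex with the strongest effective channel $s_l^{\diamond}F_{u,l}$, and that this vertex is used with nonzero time. The engine of the proof is a local exchange argument: given any optimal solution, I move transmission load from an inferior vertex to the best one in a way that (weakly) increases the number of samples delivered by that device while leaving the time budget and the energy budget untouched, so optimality is preserved; iterating wipes out all inferior slots. Throughout I use that each device belongs to exactly one set $\mathcal{G}_{c,m}$, so a device's samples all feed a single task $m$, and that the per-task objective $\theta_{1,m}\alpha_m^{\theta_{2,m}}+\theta_{3,m}$ is strictly increasing in $\alpha_m$ (with the fitted $\theta_{1,m}<0$, $\theta_{2,m}<0$) and tends to $-\infty$ as $\alpha_m\to0^{+}$.

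Part (i) is immediate. If $t^{\diamond}_{u,j}>0$ but $p^{\diamond}_{u,j}=0$, then $\log_2(1+s_j^{\diamond}F_{u,j}p^{\diamond}_{u,j})=0$, so the pair $(u,j)$ contributes nothing to any capacity constraint \eqref{P4b} yet occupies time $t^{\diamond}_{u,j}$ in the time-budget constraint of $\mathrm{P}4$. Setting $t_{u,j}=0$ leaves every $\alpha_m$ (hence the objective) unchanged, leaves \eqref{P4c} unchanged (the product $t_{u,j}p_{u,j}$ is zero either way), and only relaxes the time constraint; hence an optimal solution with $t^{\diamond}_{u,j}\neq0\Rightarrow p^{\diamond}_{u,j}\neq0$ exists.

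For (ii)--(iii), fix a device $u$ and let $j^{\star}:=\mathrm{argmax}_{l\in\mathcal{J}}\,s^{\diamond}_{l}F_{u,l}$; since $s_1=1$ forces some selected vertex and the channel gains are strictly positive, $s_{j^{\star}}^{\diamond}=1$. Suppose $t^{\diamond}_{u,j}>0$ for some $j\neq j^{\star}$; the constraint $(1-s_j^{\diamond})t_{u,j}=0$ forces $s_j^{\diamond}=1$, so $F_{u,j}=s_j^{\diamond}F_{u,j}\le s_{j^{\star}}^{\diamond}F_{u,j^{\star}}=F_{u,j^{\star}}$. Merge the slots $(u,j)$ and $(u,j^{\star})$ into one at $j^{\star}$: keep the total time $T_0:=t^{\diamond}_{u,j}+t^{\diamond}_{u,j^{\star}}$, set the new power to the time-weighted average $\tilde p_{u,j^{\star}}:=(t^{\diamond}_{u,j}p^{\diamond}_{u,j}+t^{\diamond}_{u,j^{\star}}p^{\diamond}_{u,j^{\star}})/T_0$, and set $\tilde t_{u,j}=\tilde p_{u,j}=0$. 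Then the total transmission time of $u$ and the total transmission energy $\sum t_{u,j}p_{u,j}$ are unchanged, so the time-budget constraint and \eqref{P4c} still hold, and all other devices are untouched. It remains to check the samples delivered by $u$ do not drop; writing $\lambda=t^{\diamond}_{u,j}/T_0$, this reduces to
\[
\log_2\!\bigl(1+F_{u,j^{\star}}(\lambda p^{\diamond}_{u,j}+(1-\lambda)p^{\diamond}_{u,j^{\star}})\bigr)\ \ge\ \lambda\log_2(1+F_{u,j}p^{\diamond}_{u,j})+(1-\lambda)\log_2(1+F_{u,j^{\star}}p^{\diamond}_{u,j^{\star}}),
\]
which follows by first using $F_{u,j}\le F_{u,j^{\star}}$ to bound the first term on the right and then applying Jensen's inequality to the concave map $x\mapsto\log_2(1+F_{u,j^{\star}}x)$. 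Hence the right-hand side of \eqref{P4b} for device $u$'s class only increases, so the old $\alpha_m$ remains feasible and the objective does not decrease; the new point is again optimal. Iterating over all $j\neq j^{\star}$, and over all $u$ (each transfer touches only one device), yields an optimal solution obeying (iii). For (ii), note that $\alpha_m>0$ at any optimum (else the objective is $-\infty$), so each constraint \eqref{P4b} forces $\sum_{u'\in\mathcal{G}_{c,m}}\sum_j t^{\diamond}_{u',j}\log_2(\cdot)>0$; combined with (iii) and the strict monotonicity of the objective in $\alpha_m$ — which prevents leaving time/energy idle while a device's best slot is empty — we get $t^{\diamond}_{u,j^{\star}}\neq0$.

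The exchange inequality (the Jensen step) is routine once the weighted-average power is chosen correctly, and its validity is robust even if one reinstates the $p_{u,j}\le P_{\max}$ bound, since $\tilde p_{u,j^{\star}}$ is a convex combination of admissible powers. I expect the real obstacle to be making part (ii) airtight: the transfer argument cleanly yields (i) and (iii), but upgrading ``some device in each class is active'' to ``every device's best slot is strictly active'' needs either the one-device-per-class reading of $\mathcal{G}_{c,m}$ or an explicit argument that at an optimum any residual time/energy budget is necessarily pushed into devices' best slots given the max--min structure. All the care in the proof is in this bookkeeping of active constraints, not in the algebra.
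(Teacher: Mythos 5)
Your proposal is correct in substance and sufficient for what the theorem is used for, but it takes a genuinely different route from the paper. For part (i) the paper argues by contradiction inside a restricted single-pair subproblem (its $\mathrm{P}9$): if $t^{\diamond}_{v,w}=\hat t\neq 0$ with zero power, some feasible $(\tilde t,\tilde p)$ with $\tilde p>0$ would strictly raise the throughput term, so the point cannot be optimal; your ``zero out the wasted time'' step instead exhibits an equally good solution, which is a weaker (existence-type) conclusion but all that is needed to impose the equality constraints the paper appends to $\mathrm{P}4$. For parts (ii)--(iii) the divergence is larger. The paper first proves a lemma splitting the time and energy budgets across tasks via weights $\beta_{T,m},\beta_{E,m}$ (its $\mathrm{P}7$), establishes (ii) by swapping the whole pair $(\tilde t,\tilde p)$ from an inferior vertex to the best one, and establishes (iii) through a KKT/stationarity argument: the Lagrangian conditions force a strictly monotone function of the effective gain $s^{\diamond}_l F_{u,l}$ to take the same value at any two active vertices, which is impossible unless the gains coincide. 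Your merging argument --- total time preserved, power set to the time-weighted average, then $F_{u,j}\le F_{u,j^{\star}}$ followed by Jensen's inequality for the concave map $x\mapsto\log_2(1+F_{u,j^{\star}}x)$ --- replaces both the budget-splitting lemma and the KKT machinery with one elementary inequality, and it preserves the time and energy budgets exactly, so the $\beta$'s never enter. What you give up is the ``every optimal solution'' reading of the statement, since an exchange only produces \emph{some} optimal solution with the claimed support; again, that is all the subsequent dimension reduction requires. The soft spot you flagged in part (ii) --- forcing \emph{every} device's best slot to be active when a class may contain several devices --- is real, but the paper's own proof has exactly the same gap: it asserts without justification that $t^{\diamond}_{v,w}=0$ at the best vertex implies some other vertex carries nonzero time for that same device, which only follows under the implicit one-device-per-class assumption. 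So you are not behind the paper on that point; neither argument closes it.
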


\begin{proof}
See Appendix B
\end{proof}

\textbf{Insights of Theorem 1.}
Part (i) of \textbf{Theorem 1} indicates that the UGV should leave no blank time, which is in contrast to the fixed EL systems that may have blank time for energy saving.
Part (ii) and (iii) of \textbf{Theorem 1} indicates that each IoT device should be served only once along the path, and the corresponding position has the best channel among all selected positions to that device.

According to \textbf{Theorem 1},
we can add the following equality constraints to $\mathrm{P}4$ without changing the optimal solution:
\begin{align}
&t_{u,j}=\left\{
\begin{aligned}
&e_u
,&\mathrm{if}~j=\mathop{\mathrm{argmax}}_{l\in\mathcal{J}}~{s}^{\diamond}_{l}F_{{u},{l}}&
\\
&0,&\mathrm{if}~j\neq\mathop{\mathrm{argmax}}_{l\in\mathcal{J}}~{s}^{\diamond}_{l}F_{{u},{l}}&
\end{aligned}
\right.,
\label{tuj}
\end{align}
and
\begin{align}
&p_{u,j}=\left\{
\begin{aligned}
&f_u
,&\mathrm{if}~j=\mathop{\mathrm{argmax}}_{l\in\mathcal{J}}~{s}^{\diamond}_{l}F_{{u},{l}}&
\\
&0,&\mathrm{if}~j\neq\mathop{\mathrm{argmax}}_{l\in\mathcal{J}}~{s}^{\diamond}_{l}F_{{u},{l}}&
\end{aligned}
\right.,
\label{puj}
\end{align}
where $e_{u}>0$ according to (ii) of \textbf{Theorem 1}, and $f_{u}>0$ according to (i) of \textbf{Theorem 1}.
\textbf{Theorem 1} indicates that for each $m$, the UGV only needs to allocate time to device $u$ at vertex $j=\mathrm{argmax}_{l\in\mathcal{J}}{s}^{\diamond}_{l}F_{{u},{l}}$, and should save transmission time and energy at other vertices.
Subsequently, by putting \eqref{tuj} and \eqref{puj} into $\mathrm{P}7$ (an equivalent form of $\mathrm{P}4$, see proof of theorem 1 in appendix B), $\mathrm{P}4$ is equivalently transformed into:
\begin{subequations}
\begin{align}
\mathrm{P}5:
\mathop{\mathrm{max}}\limits_{\substack{\left\{\alpha_m\right\},\\ \left\{e_u>0,f_u>0\right\}}}
\quad&\Bigg({\theta}_{1,m}\alpha_m^{{\theta}_{2,m}}
+{\theta}_{3,m}\Bigg) \nonumber
\\\mathrm{s.t.}\quad\quad&
\alpha_{m}-\sum_{u\in\mathcal{G}_{c,m}}{e_{u}}B\Big/A \nonumber\\
&\mathrm{log}_2\left(1+M_u(s^{\diamond})f_u\right)\leq 0,\quad \forall c,m \label{P6b}\\
&\sum_{u=1}^Ue_{u}f_u+\epsilon\Theta(\mathbf{E}^{\diamond})\nonumber\\
&\leqslant \beta_{E,m}E_{all}, \quad \forall {m}, \label{P6c}\\
&\sum_{u=1}^Ue_{u}+\frac{\mathrm{Tr}(\mathbf{D}^T\mathbf{E}^{\diamond})}{v}\nonumber\\
&\leqslant \beta_{T,m}T_{all}, \quad \forall {m},\label{P6d}\\
&e_u>0,\quad f_u>0,\\
&\sum_{m=1}^M\beta_{E,m}=1, \quad \sum_{m=1}^M\beta_{T,m}=1,
\end{align}
\end{subequations}
where $M_u(\mathbf{s}^{\diamond}):=\mathrm{max}_{l}s_{l}^{\diamond}F_{{u},{l}}$, which is a constant with $s_{l}=s_{l}^\diamond$. However, the $\left\{e_u,f_u\right\}$ optimization problem in the form of $\mathrm{P}5$ is still difficult to solve due to the nonlinear coupling between $e_{u}$ and $f_{u}$ in (24a) and (24b). To address this problem, we replace $\left\{f_{u}\right\}$ with new variables $\left\{\delta_{u}\right\}$ such that $\left\{\delta_{u}:=t_{u}f_{u}\right\}$. As such, the optimization problem $\mathrm{P}5$ becomes:
\begin{subequations}
\begin{align}
\mathrm{P}6:\quad\quad\quad\quad&
\nonumber\\
\mathop{\mathrm{max}}\limits_{\substack{\left\{\alpha_m\right\},\\ \left\{e_u>0,\delta_u>0\right\}}}
\quad&\Bigg({\theta}_{1,m}\alpha_m^{{\theta}_{2,m}}
+{\theta}_{3,m}\Bigg) \nonumber
\\\mathrm{s.t.}\quad\quad&
\alpha_{m}-\sum_{u\in\mathcal{G}_{c,m}}{e_{u}}B\Big/A \nonumber\\
&\mathrm{log}_2\left(1+\frac{\delta_u}{e_{u}} M_u(s^{\diamond})\right)\leq 0,\quad \forall c,m \label{P6b}\\
&\sum_{u=1}^U\delta_{u}+\epsilon\Theta(\mathbf{E}^{\diamond})\leqslant \beta_{E,m}E_{all}, \quad \forall {m}, \label{P6c}\\
&\sum_{u=1}^Ue_{u}+\frac{\mathrm{Tr}(\mathbf{D}^T\mathbf{E}^{\diamond})}{v}\nonumber\\
&\leqslant \beta_{T,m}T_{all}, \quad \forall {m},\label{P6d}\\
&e_u>0,\quad \delta_u>0,\\
&\sum_{m=1}^M\beta_{E,m}=1, \quad \sum_{m=1}^M\beta_{T,m}=1,
\end{align}
\end{subequations}
It can be seen that the variable dimension in problem $\mathrm{P}6$ is only $2U+M$, which is reduced by a factor of $J$ compared with the variable dimension in problem $\mathrm{P}4$. Therefore, solving $\mathrm{P}6$ requires a complexity of $\emph{O}\left(\left[2U+M\right]^{3.5}\right)$, which is significantly smaller than $\emph{O}\left(\left[2UJ+M\right]^{3.5}\right)$ for solving $\mathrm{P}4$.
Meanwhile, the solutions $\left\{e^{\diamond}_{u},\delta^{\diamond}_{u},\alpha^{\diamond}_{m}\right\}$ can be obtained by the following proposition.
\begin{proposition}
$\mathrm{P}6$ is a convex optimization problem.
\end{proposition}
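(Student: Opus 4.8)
The plan is to verify directly the two defining properties of a convex program: that the feasible set is convex, and that the objective is concave (equivalently, that the maximization is minimization of a convex function over a convex set). A useful preliminary is that once $\mathbf{s}^{\diamond}$, and hence the TSP solution $\mathbf{E}^{\diamond}$, are fixed at the point where $\mathrm{P}6$ is formed, the quantities $M_u(\mathbf{s}^{\diamond})$, $\Theta(\mathbf{E}^{\diamond})$ and $\mathrm{Tr}(\mathbf{D}^{T}\mathbf{E}^{\diamond})$ are all nonnegative constants; the genuine decision variables are only $\{\alpha_m,e_u,\delta_u,\beta_{E,m},\beta_{T,m}\}$, restricted to the positive orthant together with the two normalization equalities.

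First I would treat the objective (with the outer $\mathop{\mathrm{min}}_{\forall m}$ understood from $\mathrm{P}4$). Putting it in epigraph form, $\max\,t$ subject to $t\le \theta_{1,m}\alpha_m^{\theta_{2,m}}+\theta_{3,m}$ for every $m$, it suffices to show that $\alpha_m\mapsto \theta_{1,m}\alpha_m^{\theta_{2,m}}$ is concave on $\alpha_m>0$. Its second derivative is $\theta_{1,m}\theta_{2,m}(\theta_{2,m}-1)\,\alpha_m^{\,\theta_{2,m}-2}$, which under either sign convention stated after \eqref{model} ($\theta_{1,m}<0,\ \theta_{2,m}<0$, or $\theta_{1,m}>0,\ 0<\theta_{2,m}<1$) is nonpositive; hence the objective term is concave, the $M$ epigraph inequalities are convex, and maximizing $t$ is a concave maximization.

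Next I would check the constraints one at a time. The energy budget \eqref{P6c}, the time budget \eqref{P6d} and the normalization equalities are affine in $(\delta_u,e_u,\beta_{E,m},\beta_{T,m})$, hence convex; the positivity constraints $e_u>0,\ \delta_u>0$ define a convex cone. The step I expect to be the main obstacle is the capacity constraint \eqref{P6b}: I must show the left-hand side $\alpha_m-\sum_{u\in\mathcal{G}_{c,m}}(B/A)\,e_u\log_2\!\big(1+(\delta_u/e_u)M_u(\mathbf{s}^{\diamond})\big)$ is jointly convex in $(\alpha_m,\{e_u,\delta_u\}_{u})$. The $\alpha_m$ term is linear; for each remaining summand I would invoke the perspective transformation: $f(\delta)=\log_2\!\big(1+M_u(\mathbf{s}^{\diamond})\delta\big)$ is concave in $\delta\ge 0$ (the concave $\log$ composed with an affine map), so its perspective $g(e,\delta)=e\,f(\delta/e)=e\log_2\!\big(1+(\delta/e)M_u(\mathbf{s}^{\diamond})\big)$ is jointly concave on $\{e>0\}$. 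Therefore $-g$ is convex, a nonnegative-weighted sum of such terms is convex, the whole left-hand side is convex, and the sublevel set $\{\,\cdot\le 0\,\}$ is convex.

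Combining all pieces, the feasible region is a finite intersection of convex sets and the epigraph-reformulated objective is linear over it, so $\mathrm{P}6$ is a convex optimization problem; this is exactly what licenses solving it (the line ``Solve $\mathrm{P}6$'' in Algorithm 1) by a standard interior-point method at the stated $O\big([2U+M]^{3.5}\big)$ cost. The only points needing care in the write-up are the domain restriction $e_u>0$ that makes the perspective $g$ well defined, and the brief case split on the signs of $(\theta_{1,m},\theta_{2,m})$ for the objective.
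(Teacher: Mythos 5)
Your proof is correct and follows essentially the same route as the paper's Appendix C: concavity of $\theta_{1,m}\alpha_m^{\theta_{2,m}}+\theta_{3,m}$ via its second derivative, the perspective-function argument to establish joint convexity of the capacity constraint in $(e_u,\delta_u)$, and affinity of the remaining budget and normalization constraints. Your version is somewhat more careful than the paper's (explicit sign case analysis for $(\theta_{1,m},\theta_{2,m})$ and the epigraph reformulation of the inner $\min_m$), but the underlying argument is the same.
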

\begin{proof}
See Appendix C
\end{proof}

According to \textbf{Proposition 1}, $\mathrm{P}6$ can be solved by the existing solvers, e.g. CVX \cite{cvx}. After obtaining the solutions $\left\{e^{\diamond}_{u},\delta^{\diamond}_{u} \right\}$, the optimal $\left\{f_{u}^{\diamond}\right\}$ to $\mathrm{P}6$ can be computed by
$\left\{f_{u}^{\diamond}={\delta_{u}^{\diamond}}/{e_{u}^{\diamond}}\right\}$.
\subsection{Practical Implementation}
The entire procedure of EL at UGV is shown in
Fig. 3. It can be seen from Fig. 3 that the EL at UGV is divided into three phases: 1) parameter estimation phase; 2) data collection with JPESP problem phase; and 3) model training at the edge phase.
The AlexNet is used in phase 1 and phase 3.
\begin{figure}[!t]
\centering
\includegraphics[width=85mm]{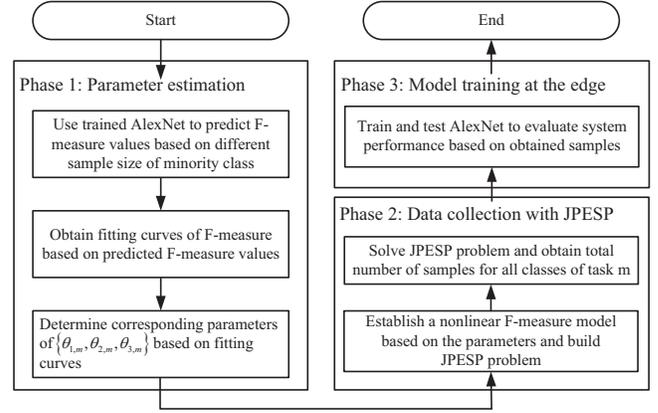}
\caption{The entire procedure for practical implementation.}
\label{fig_sim}
\end{figure}

During phase 1, one may wonder how could one obtain the fitted F-measure model before the
actual AlexNet model is being trained. This can be done via extrapolation \cite{obtaindata1}. More specifically,
in real-world applications, the data is being generated everyday at the IoT infrastructures \cite{obtaindata2}.
For example, in Internet of vehicles \cite{obtaindata3}, the on-board data (e.g., images, videos, LiDAR point
clouds) needs to be uploaded to the edge or cloud platform for big data analysis. In such a
scenario, it is not likely that the data is uploaded only once. Instead, the data is uploaded
multiple times \cite{obtaindata4} (e.g., twice a day/week). Therefore, there exist historical dataset samples at
the UGV edge and the F-measure can be obtained by fitting the corresponding coefficients
to the historical dataset. More specifically, by varying the samples sizes from $50$ to $1700$, we can obtain $(\theta_{1,1},\theta_{2,1},\theta_{3,1})=(-3.167,-0.3208,1.149)$ for task 1 (the purple dashed line in Fig. 2 (a)) and $(\theta_{1,2},\theta_{2,2},\theta_{3,2})=(-1.035,-0.4193,0.9456)$ for task 2 (the black dashed line in Fig. 2 (a)). It can be seen from Fig. 2 (a) that the fitting performance based on small historical dataset (i.e. the purple and black dashed lines) is slightly worse than that based on full dataset (i.e. the red and blue solid lines). But since our goal is to distinguish different task difficulties rather than accurate prediction of the F-measure values, the extrapolation method can still guide the JPESP design at the UGV and the IoT devices.

Notice that if there exist historical samples at the UGV, the total number of samples $y_{c,m}$ obtained for training the class $c$ of task $m$ (7) is redefined as:
\begin{align}\textcolor{red}{\label{26}}
y_{c,m}=\sum_{u\in\mathcal{G}_{c,m}}\sum_{j=1}^Jx_{u,j}+a_{c,m},
\end{align}
where $a_{c,m}$ is the number of historical samples for class $c$ of task $m$ at the UGV edge.

Accordingly, the first constraint of $\mathrm{P}1$ becomes:
\begin{align}
\alpha_{m}&\leq\sum_{u\in\mathcal{G}_{c,m}}\sum_{j=1}^Jt_{u,j}\frac{B}{A}\mathrm{log}_2\left(1+s_j\frac{p_{{u},{j}}\left|h_{{u},{j}}\right|^2}{{\sigma}^{2}}\right)\nonumber\\
&\quad{}{}
+\sum_{c=1}^{C_{m}}a_{c,m},\quad \forall {c,m}
\end{align}
comparing (27) with the first constraint in $\mathrm{P}1$, it can be seen that the
only difference is the additional constant term $\sum_{c=1}^{C_{m}}a_{c,m}$ in (27). Therefore, the convexity of
the problem is unchanged, and the proposed \textbf{Algorithm 1} is still applicable.

\vspace{0.2in}
\section{Simulation And Discussion}

In this section, simulations are provided to evaluate the system performance of the EL-UGV system. The simulation settings are set as follows unless specified otherwise. The time budget $T_{all}$ and energy budget $E_{all}$ are set to be $T_{all}=300$ s and $E_{all}=2000$ Joule, respectively. The data collection map is a $20\,\textrm{m} \times 20\,\textrm{m}=400\,\mathrm{m}^{2}$ square area. Within the map, $J=12$ vertices are uniformly scattered wherein $j=1$ is chosen as the starting point of the UGV. The channel $h_{{u},{j}}$ from IoT device $u$ to vertex $j$ is modeled as
$\sim \mathcal{CN}(0,\rho_{0}(d_{{u},{j}}/d_{0})^{-3})$ \cite{channel}, where $d_{{i},{j}}$ is the distance between $u$ to $j$ and $\rho_{0}=10^{-3}$ is the path-loss at $d_{0}=1$, and the channel bandwidth $B$ is set as $0.18$ MHz.
For comparison, two benchmark schemes are simulated: 1)\emph{ Fixed EL} (i.e., optimal solution to $\mathrm{P}1$ with $\mathbf{s}=[1,0,...,0]^{T}$; 2) \emph{full path EL} (i.e., optimal solution to $\mathrm{P}1$ with $\mathbf{s}=[1,1,...,1]^{T}$).

\subsection{Verification of Theoretical Results}

To verify the benefit brought by the F-measure model, we first simulate the \emph{fixed EL} and \emph{full path EL} schemes with or without the F-measure model.
For the schemes without the F-measure model, the power and time resources are optimized for maximizing the minimum throughput among all devices \cite{throughput1}.
Fig.~4 (a) shows the learning performance versus the noise power $\sigma^{2}$.
It can be seen that the \emph{full path EL} scheme achieves better performance that that of the \emph{fixed EL} scheme, and the gap quantifies the benefit by allowing the edge to move.
Furthermore, no matter the UGV is fixed at the starting point or moving around, the schemes based on the F-measure model always achieves much higher learning performance than those without F-measure model.
This indicates that it is necessary to adopt the F-measure model for energy and sample size planning .

\begin{figure}[t]
\center
\includegraphics[width=85mm]{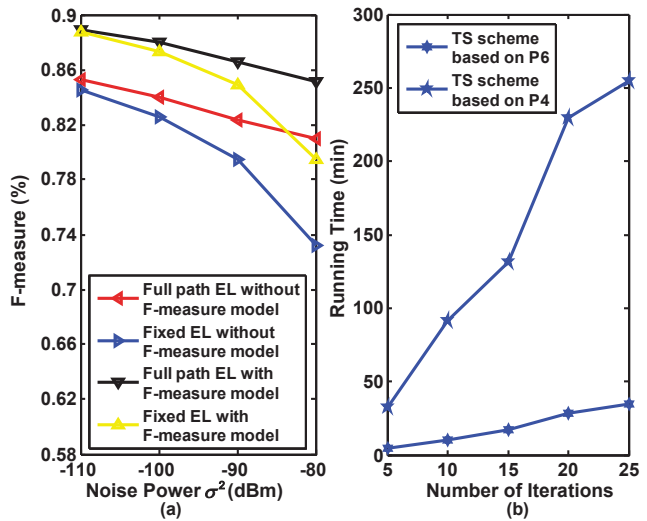}
\caption{Verification of theoretical results. (a) is the comparison between schemes with and without the F-measure model. (b) is running time versus the number of iterations for TS scheme based on $\mathrm{P}4$ and $\mathrm{P}6$.}
\label{fig:}
\end{figure}

Next, to verify the significance of \textbf{Theorem 1}, we consider the following schemes: 1) \emph{TS scheme based on $\mathrm{P}6$}; 2) \emph{TS scheme based on $\mathrm{P}4$} for running time comparison using Matlab on a computer equipped with Intel Core E5-26200 2GB, and 32GB RAM memory. Fig.~4 (b) shows the running time of the two schemes versus the number of iterations, with each point in Fig. 4 (b) averaged over 10 independent channel realizations.
From Fig. 4 (b), it is observed that TS scheme based on $\mathrm{P}6$ takes less running time than that based on $\mathrm{P}4$ at the same number of iterations.
On the other hand, since $\mathrm{P}4$ is equivalent to $\mathrm{P}6$, the TS scheme based on $\mathrm{P}4$ and $\mathrm{P}6$ would require the same number of iterations for convergence. Therefore, TS scheme based on $\mathrm{P}6$ is faster than that based on $\mathrm{P}4$. Moreover, the gap of running time between the two schemes significantly increases as the number of iterations increase.
This corroborates the fact that the variable dimension of $\mathrm{P}6$ is reduced by using \textbf{Theorem 1}.

\subsection{Single-Task Scenario}

Now, we consider a single-task scenario with total $U=10$ IoT devices, and classify 10 different categories using CIFAR-10 dataset.
Fig. 5 (a) shows the F-measure versus the noise power $\sigma^{2}$.
It can be observed from Fig. 5 (a) that the performance of all schemes decreases when the noise power increases, as a larger noise leads to a smaller data-rate, which in turn decreases the learning performance.
Furthermore, the proposed UGV-based TS algorithm achieves the best learning performance compared to the \emph{full path EL} scheme and the \emph{fixed EL} scheme.
More specifically, when $\sigma^{2}$ is small (e.g., -110 dBm), the UGV can easily collect data samples without moving to far-away vertices (with the moving path being the red dashed line shown in Fig. 5 (b)), which saves motion energy and time consumption compared to \emph{full path EL} scheme. Therefore, the proposed TS scheme collects more training samples and achieves a higher F-measure. On the other hand, when $\sigma^2$ is large (e.g., -80 dBm), the proposed scheme allows the UGV to move closer to far-away IoT devices (with the moving path being the black line shown in Fig. 5 (b)), which combats against the noisy channel by reducing the path-loss. Therefore, the proposed TS scheme also collects more training samples and outperforms the \emph{fixed EL} scheme.

To further evaluate the performance of the proposed algorithm, practical items including the the sample size, F-measure, the total transmission time, and the overall transmission energy of the three schemes are shown in Fig.~5 (c). It can be seen from Fig. 5 (c) that the proposed scheme achieves the highest F-measure score (0.865).
This is because the proposed algorithm can automatically determine whether to move and how far to move, and find the best trade-off between moving and communicating.
Therefore, with the proposed algorithm, the edge collects the most samples (2260) for the minority class.
\begin{figure*}
\centering
\subfigure[F-measure versus noise power $\sigma^{2}$.]{\label{fig:subfig:a}
\includegraphics[width=0.3\linewidth]{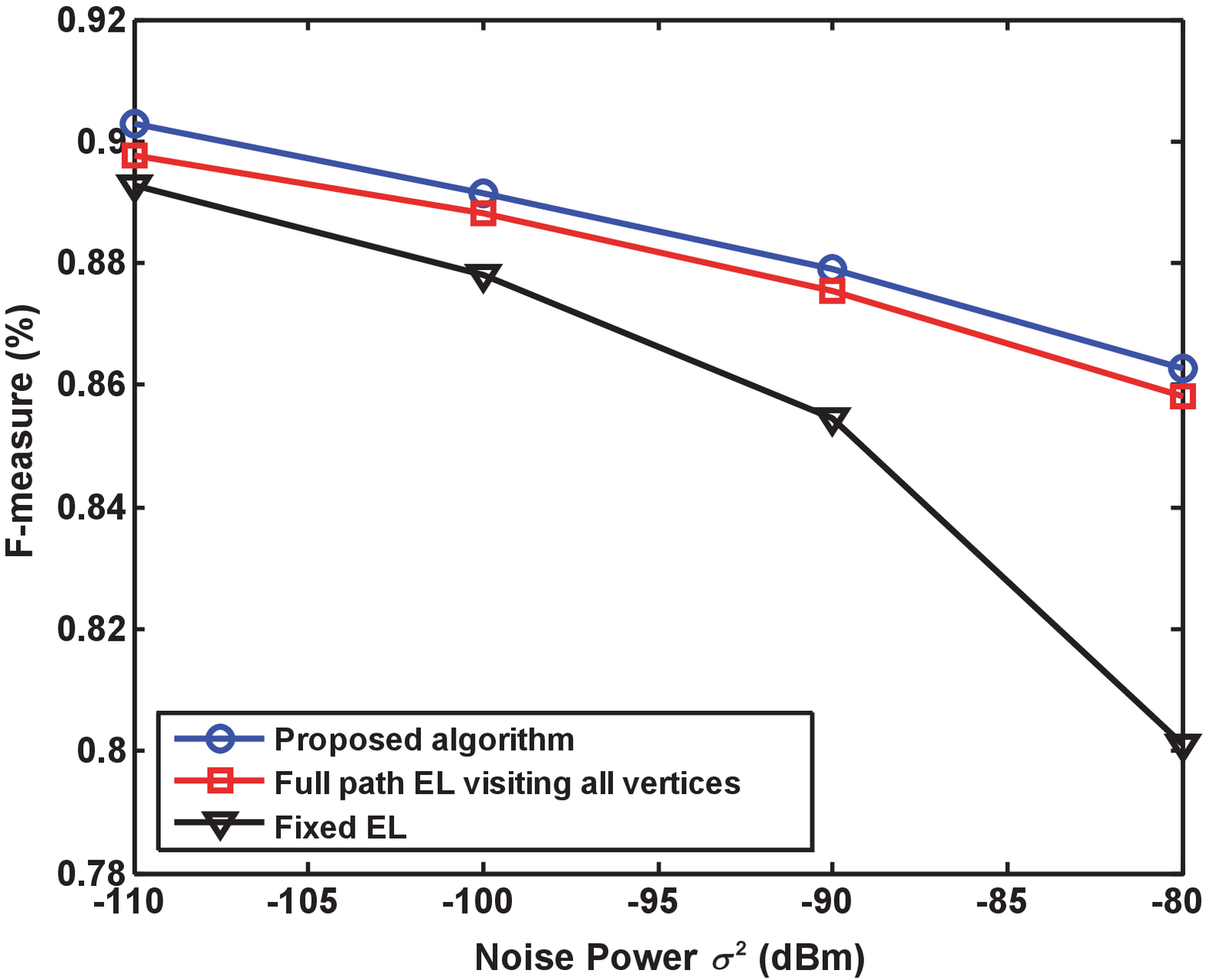}}
\hspace{0.01\linewidth}
\subfigure[The optimal path at $\sigma^{2}=-80~\rm{dBm}$ and $\sigma^{2}=-110~\rm{dBm}$.]{\label{fig:subfig:b}
\includegraphics[width=0.3\linewidth]{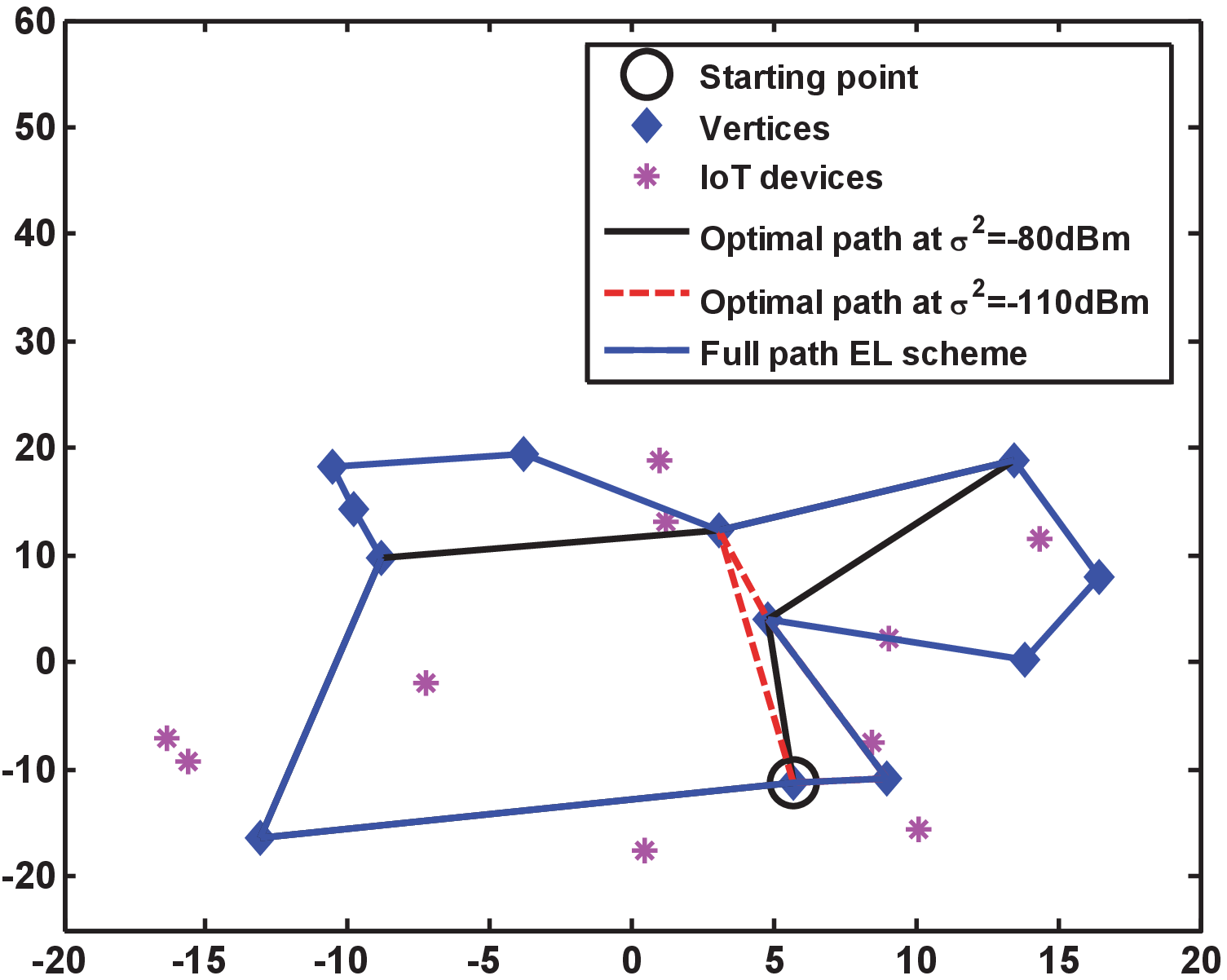}}
\hspace{0.01\linewidth}
\subfigure[Comparison of sample size, F-measure, total transmission time, overall transmission energy at $\sigma^{2}=-80~\rm{dBm}$.]{\label{fig:subfig:c}
\includegraphics[width=0.3\linewidth]{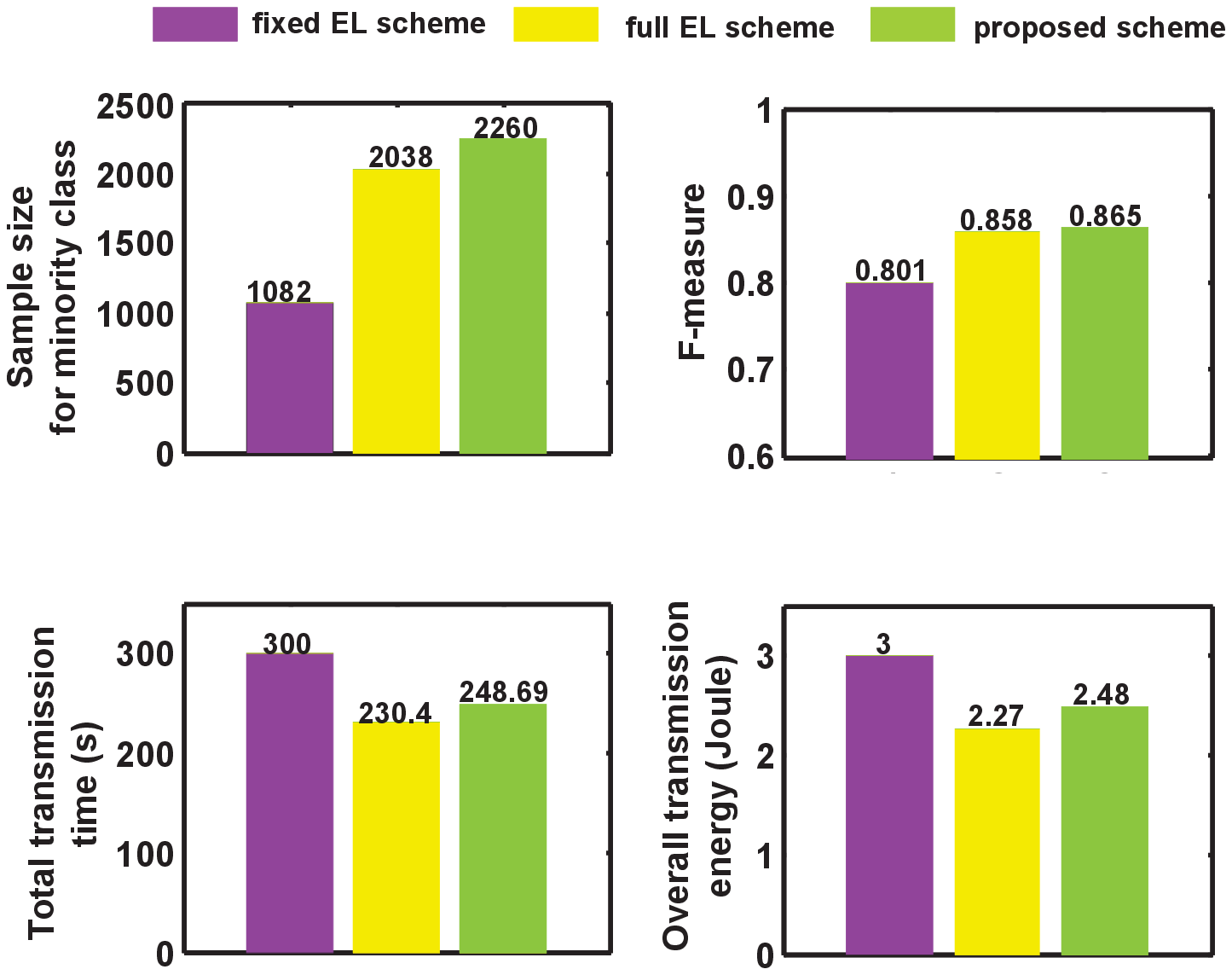}}
\caption{The system performance and optimal path for single-task scenario wherein $U=10, J=12$.}
\label{fig:subfig}
\end{figure*}

\subsection{Two-Task Scenario}
\begin{figure*}
\centering
\subfigure[Total F-measure versus noise power $\sigma^{2}$.]{\label{fig:subfig:a}
\includegraphics[width=0.3\linewidth]{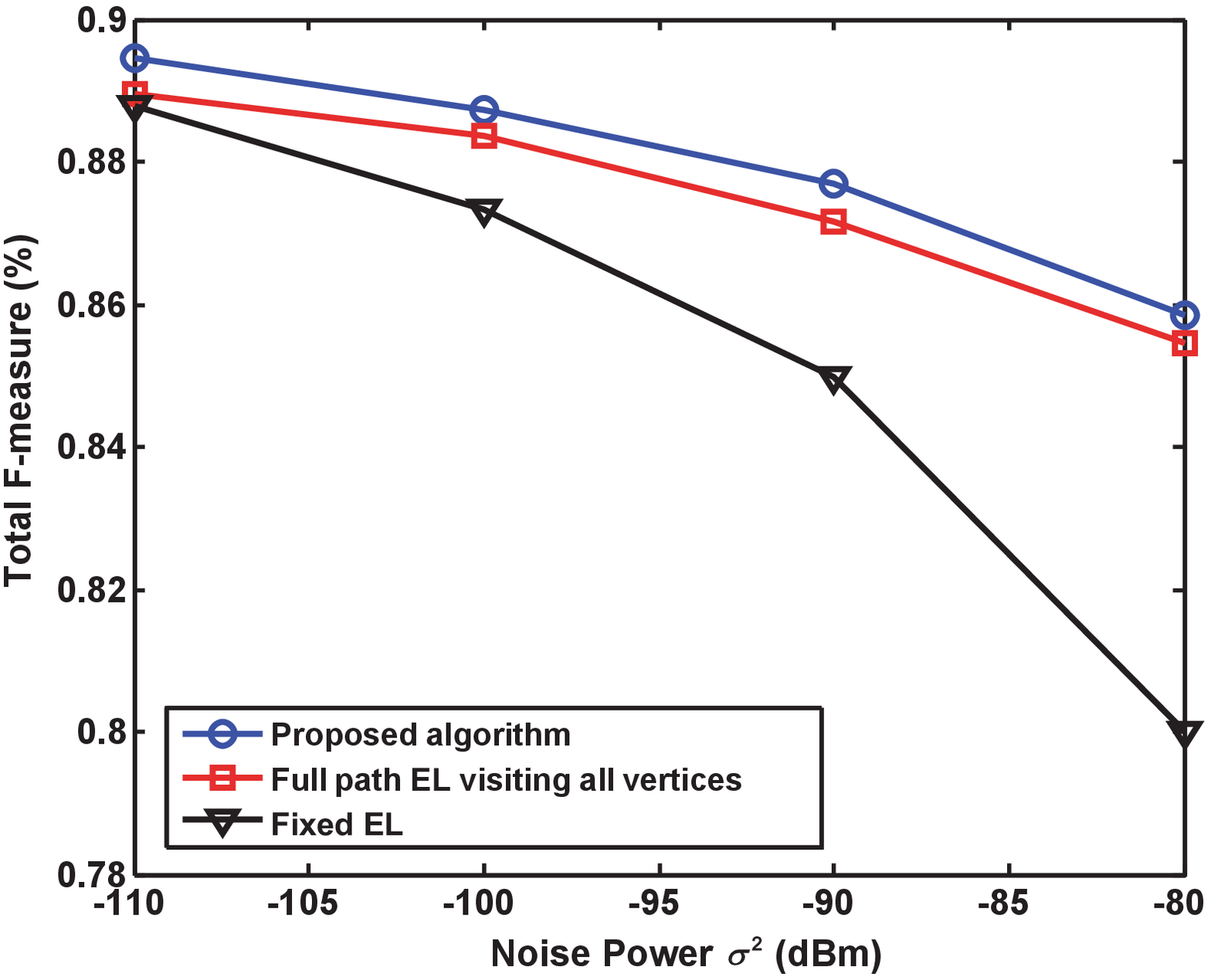}}
\hspace{0.01\linewidth}
\subfigure[The optimal path at $\sigma^{2}=-80~\rm{dBm}$ and $\sigma^{2}=-110~\rm{dBm}$.]{\label{fig:subfig:b}
\includegraphics[width=0.3\linewidth]{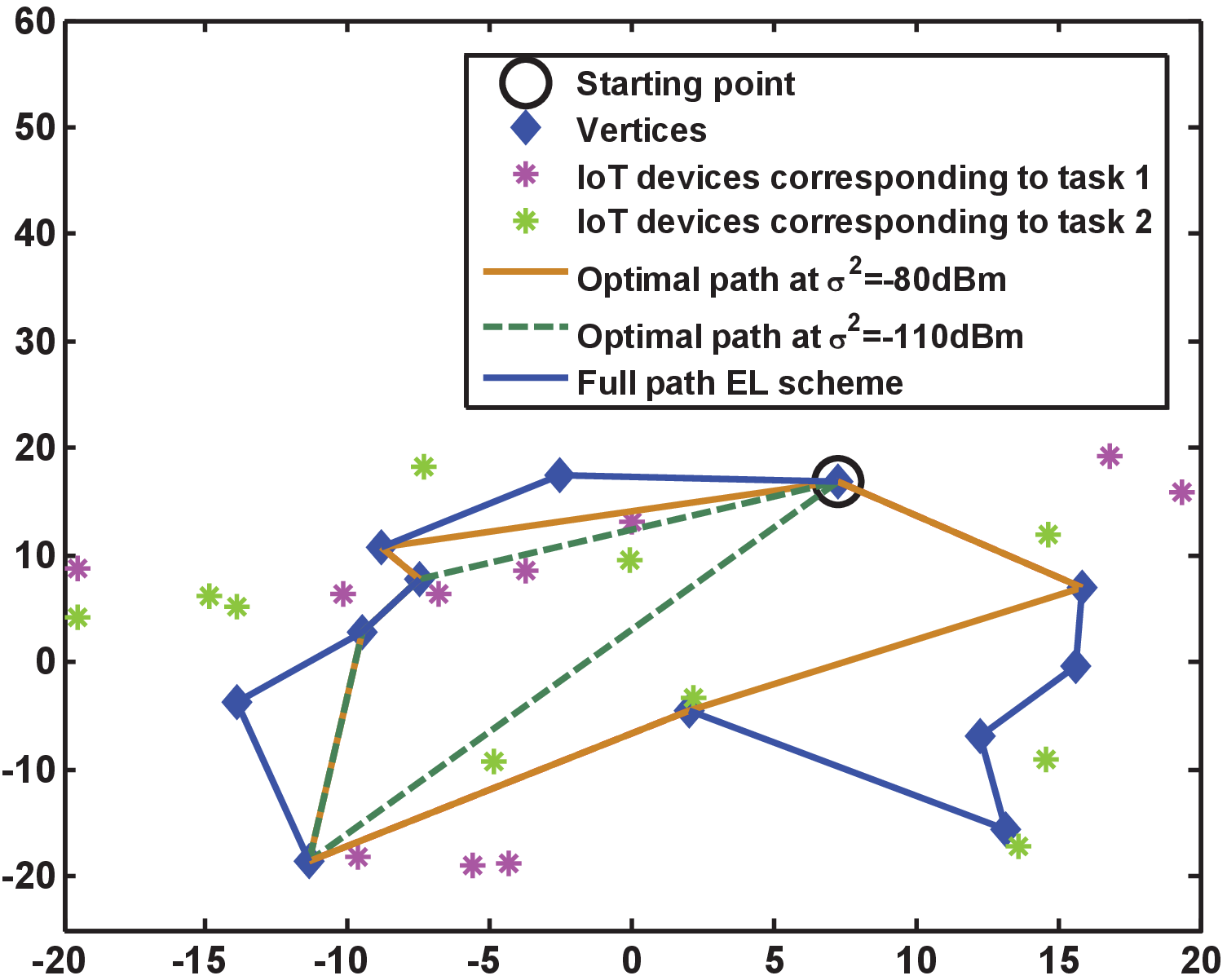}}
\hspace{0.01\linewidth}
\subfigure[Comparison of sample size, total F-measure, total transmission time, overall transmission energy at $\sigma^{2}=-80~\rm{dBm}$.]{\label{fig:subfig:c}
\includegraphics[width=0.3\linewidth]{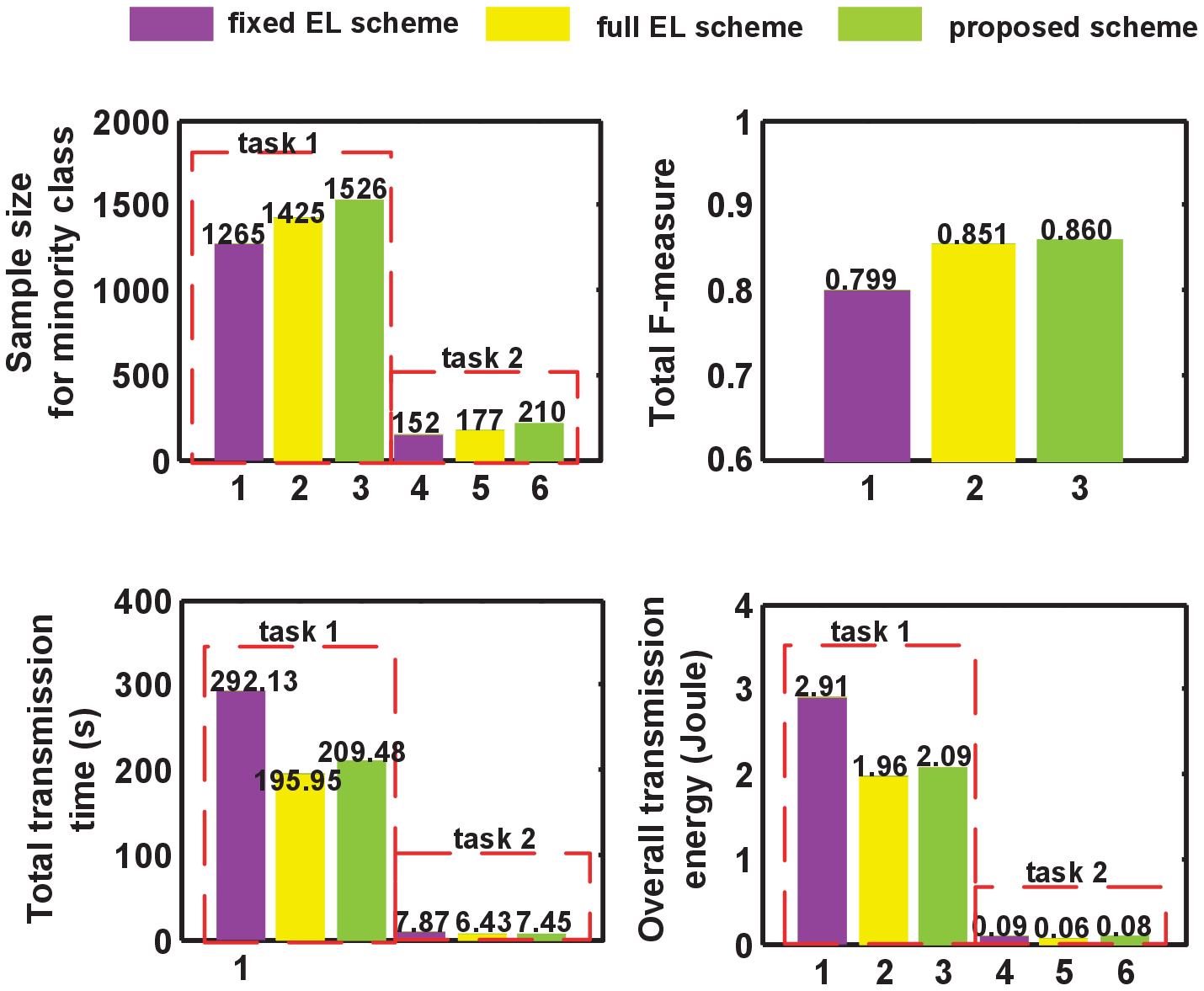}}
\caption{The system performance and optimal path for two-task scenario when $U=20, J=12$.}
\label{fig:subfig}
\end{figure*}

In such scenario, we consider the case of $U=20$ and $M=2$. The first task is to classify the CIFAR-10 dataset into 10 categories, and the second task is to classify the MNIST fashion dataset into 10 categories. The 20 categories are distributed at the data of these 20 devices, with each device providing the training samples of a particular category.
We adopt the values of $\{\theta_{1,m}, \theta_{2,m}, \theta_{3,m}\}_{m=1}^M$ in Section IV-B.

To evaluate the superiority of the proposed algorithm, we compare the total F-measure (i.e. we first compute F-measure of task 1 and F-measure of task 2 respectively, and then take the minimum of the two F-measures as the total F-measure) of different schemes. The performance brought by total F-measure model is demonstrated in Figure 6 (a). It can be seen from Fig. 6 (a) that the proposed algorithm still achieves the largest F-measure over a wide range of noise power. Similar to the single-task case, the UGV should increase its path length when the noise power increases as shown in Fig. 6 (b). However, in addition to the above feature, the proposed TS scheme always chooses the positions that are closer to the IoT devices of task 1 (as shown in Fig. 6 (b)). This is because the task 1 needs to classify the CIFAR-10 dataset, which is much more challenging than the MNIST dataset in task 2. As a result, in order to achieves the best F-measure learning performance, more time and energy resources need to be allocated to task 1 (the total transmission time is 209.48 s and the overall transmission energy is 2.09 Joule shown in Fig. 6 (c)) to collect more samples of it (1526). This indicates that the proposed TS scheme can not only find the best trade-off between moving and communicating, but also can adapt to the difficulty and importance of different tasks. Furthermore, it can be seen from Fig. 6 (c), for the two-task scenario, the proposed scheme achieves the highest F-measure of 0.86, and the sample sizes of it for both task 1 and task 2 are also the largest (1526 for task 1 and 210 for task 2).

\subsection{Comparison with Related Works}
Existing works have studied the energy planning problem \cite{compare1} or the UGV path planning problem \cite{compare2}.
While \cite{compare1},\cite{compare2} aim to maximize the communication throughput instead of the learning performance (e.g., F-measure adopted in this paper), they have employed the same multiple access scheme (i.e., time division multiple access (TDMA)) as our EL system. Therefore, the two algorithms can be applied to our considered UGV edge learning system with some minor modifications. Specifically,
\begin{itemize}
\item
For problem (4a)-(4d) in \cite{compare1}, we set $\tau_0 = 0$ in (4a)-(4d) and $\gamma_k=p_{u,1}|h_{u,1}|^2\sigma^2$ in (4a), and replace the constraint (4b) with $\sum_{k=1}^Kp_kt_k \leq E_{all}$. Then the TDMA scheme in \cite{compare1} becomes the throughput maximization scheme with fixed UGV in our EL systems.
\item
For problem (7) in \cite{compare2}, we set $q^a_{i}=x^a_{i}/log_2(x^a_{i})$ in (7) and write the data amount xai
based on the Shannon information theory as $x^2_{i}=t^a_{i}Blog_2(1+s^a)p^2_{i}|h^2_{i}|^2/\sigma^2$ (notice that
$h^a_{i}=h_{u,j}$). Then the problem (7) in \cite{compare2} becomes the throughput maximization scheme with UGV path planning.
\end{itemize}}

\begin{figure}\setcounter{subfigure}{0}
\centering
\subfigure[Comparison of total F-measure, total transmission time, overall transmission energy, and sample size at $\sigma^{2}=-80~\rm{dBm}$.] { \label{fig:a}
\includegraphics[width=0.9\columnwidth]{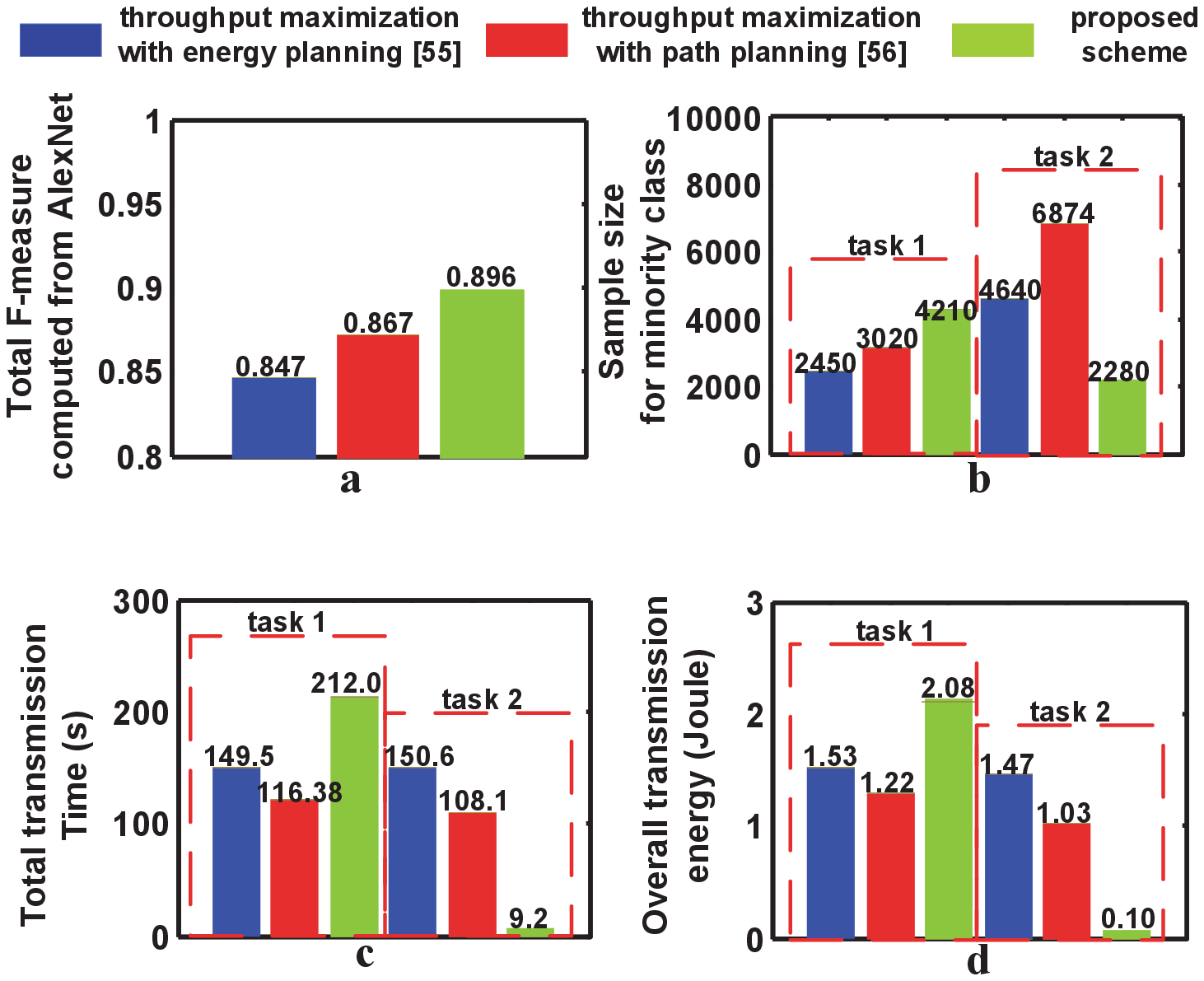}
}
\subfigure[The optimal path at $\sigma^2=-80 dBm$.] { \label{fig:b}
\includegraphics[width=0.9\columnwidth]{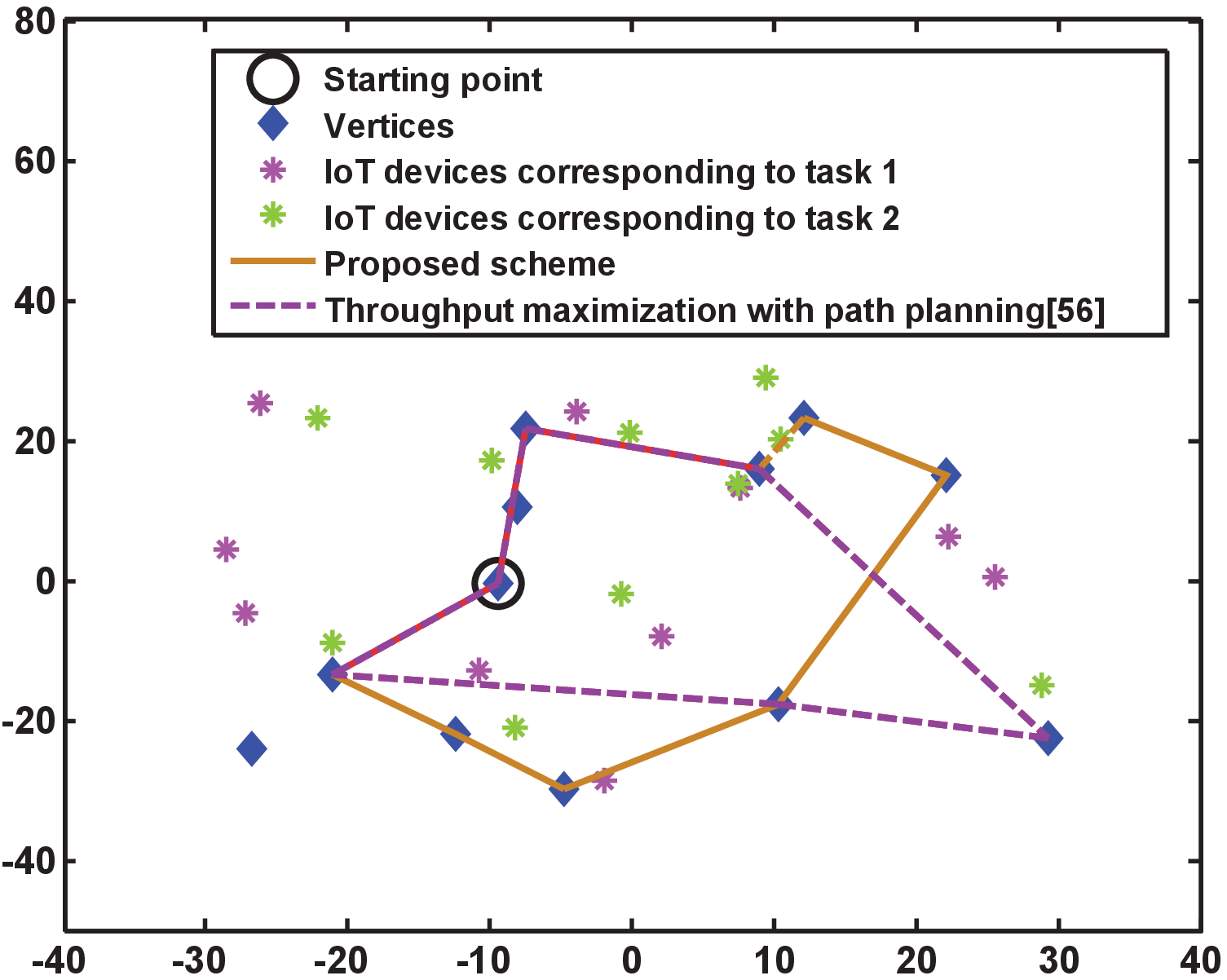}
}
\caption{Comparison with related works.}
\label{fig}
\end{figure}

Based on the above descriptions, we simulate the two-task scenario and compare our proposed EL-UGV scheme with the throughput maximization scheme with energy planning \cite{compare1} and the throughput maximization scheme with UGV path planning \cite{compare2}. For the communication procedure, we adopt the same parameters as those in Section VI but with the bandwidth $B=0.3$ MHz. For the learning procedure, the AlexNet is trained with a batch size of 32, a weight
decay of 0.0005, a momentum rate of 0.9, and a learning rate of 0.005. After training for
10000 iterations, the trained AlexNet is tested on a dataset with 1000 unseen images (each
class has 100 images). The parameters of the F-measure model are estimated by using 1700
historical samples (the number of historical samples of all classes is 1700), and they are given by $(\theta_{1,1},\theta_{2,1},\theta_{3,1})=(-1.961,-0.09712,1.795)$ for task 1
$(\theta_{1,2},\theta_{2,2},\theta_{3,2})=(-0.755,-0.08913,1.302)$ for task 2.

Under the above setting, the total transmission time, the total transmission energy and the minimum sample size among all classes are obtained by executing \textbf{Algorithm 1}.
For both deep learning tasks, we train the AlexNet according to the sample sizes of all classes obtained from the simulated algorithms (including \textbf{Algorithm 1}, [55] and [56]) and calculate the total F-measure scores for all the schemes.
It can be seen from Fig.~7 (a) that the proposed EL-UGV scheme achieves the highest F-measure score.
This is because the proposed scheme ``learns'' that the task 1 is more difficult than task 2, and allocates more transmission time and transmission energy to task 1.
In particular, with our proposed scheme, the transmission time for task 1 is 212.0 s as in Fig.~7 (a)c and the transmission energy for task 1 is 2.08 Joule as in Fig.~7 (a)d.
Both values are the largest among all the simulated schemes.
As a result, with the proposed scheme, the edge is able to collect more samples for training task 1 (collecting 4210 samples as shown in Fig.~7 (a)b), and therefore the F-measure is significantly increased compared with [55] and [56].
Notice that the proposed scheme achieves the smallest total transmission time (221.2 s) and total transmission energy (2.18 Joule), meaning that the proposed scheme can also benefit the energy saving at IoT devices.

\section{Conclusions}

In this paper, UGV was considered in EL for collecting data and training the deep learning models. In order to jointly plan the UGV path, the devices' energy consumption, and the number of training samples for different jobs, the graph-based path planning model, the network energy consumption model and the sample size planning model that is based on F-measure and minority class sample size were proposed. With these models, a JPESP problem that maximizes the minimum F-measure for all tasks was established. Since the problem is a large-scale MINLP, it was proved that each IoT device should be served only once along the path and a TS-based algorithm for optimizing the UGV path was derived. Simulations demonstrated that the proposed algorithm realizes a trade-off between moving and communicating, and adapts to the difficulty and importance for various tasks.

\appendices

\section{Confusion Matrix}
\begin{table}[!h]
\caption{{\bf Confusion Matrix}}
\centering
\begin{tabular}{lll}
\hline
Truth / Prediction & Positive  & Negative \\ \hline
Positive & $\mathrm{TP}_m$ & $\mathrm{FN}_m$  \\
Negative & $\mathrm{FP}_m$ & $\mathrm{TN}_m$ \\ \hline
\end{tabular}
\label{table1}
\end{table}

In Table~II, ``positive'' means ``minority class'' and ``negative'' means ``other classes''.
The values $(\mathrm{TP}_m,\mathrm{FN}_m,\mathrm{FP}_m,\mathrm{TN}_m)$ represent the number of samples under different combinations for task $m$.
For example, $\mathrm{TP}_m$ is the number of samples which belongs to the minority class while being classified correctly;
$\mathrm{FN}_m$ is the number of samples which belongs to the minority class while being classified incorrectly;
$\mathrm{TN}_m$ is the number of samples which belongs to other classes while being classified correctly.

\section{Proof of Theorem 1}
We assume that $\alpha_{c,m}=0$. To prove this theorem, we need the following lemma.
\begin{lemma}
There always exist $\{\beta_{E,m}, \beta_{T,m}\}$ with $\sum_{m=1}^M\beta_{E,m}=1$ and $\sum_{m=1}^M\beta_{T,m}=1$ such that $\mathrm{P}4$ is equivalent to :
\begin{subequations}
\begin{align}
\mathrm{P}7:
\mathop{\mathrm{max}}\limits_{\substack{\left\{\alpha_m\right\},\\ \left\{t_{{u},{j}},p_{{u},{j}}\right\}}}
\quad&\Bigg({\theta}_{1,m}\alpha_m^{{\theta}_{2,m}}
+{\theta}_{3,m}\Bigg) \nonumber
\\\mathrm{s.t.}\quad\quad&
\alpha_m\leq\sum_{u\in\mathcal{G}_{c,m}}\sum_{j=1}^J{t_{{u},{j}}}B\Big/A \nonumber\\
&\mathrm{log}_2\left(1+{{s}^{\diamond}_{j}} F_{{u},{j}}p_{{u},{j}}\right),\quad \forall {c,m}, \label{P4b}\\
&t_{{u},{j}}\geq 0, \quad p_{{u},{j}}\geq 0,\quad \forall {u,j}, \\
&\sum_{j=1}^J\sum_{u=1}^Ut_{{u},{j}}p_{{u},{j}}+\epsilon\Theta(\mathbf{E}^{\diamond})\nonumber\\
&\leqslant \beta_{E,m}E_{all}, \quad \forall {m},\label{P4c}\\
&\sum_{j=1}^J\sum_{u=1}^Ut_{{u},{j}}+\frac{\mathrm{Tr}(\mathbf{D}^T\mathbf{E}^{\diamond})}{v}\nonumber\\
&\leqslant \beta_{T,m}T_{all}, \quad \forall {m},\label{P1d}\\
&(1-s^{\diamond}_{j})t_{{u},{j}}=0,\quad \forall {u,j},
\end{align}
\end{subequations}
where $\beta_{E,m}$ denotes the proportion of energy requirement for completing task $m$ to $E_{all}$, and $\beta_{T,m}$ denotes the proportion of time requirement for completing task $m$ to $T_{all}$.
\end{lemma}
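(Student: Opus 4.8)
The plan is to exploit the modelling assumption (stated in Section~III) that each IoT device transmits one class of images for exactly one learning task, so that the index sets $\mathcal{G}_{c,m}$ are pairwise disjoint across $m$. Consequently the transmission variables $\{t_{u,j},p_{u,j}\}$ split into $M$ task-groups, and the capacity constraints of $\mathrm{P}4$ show that each $\alpha_m$ depends only on the transmission variables of task $m$ and on the fixed quantity $\mathbf{E}^{\diamond}$. The sole coupling among tasks in $\mathrm{P}4$ is therefore the shared energy budget $E_{all}$ and the shared time budget $T_{all}$, because the motion contributions $\epsilon\,\Theta(\mathbf{E}^{\diamond})$ and $\mathrm{Tr}(\mathbf{D}^{T}\mathbf{E}^{\diamond})/v$ are constants once $\mathbf{E}^{\diamond}$ is given. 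This structure is exactly what makes a ``budget-splitting'' decomposition possible.

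First I would establish the two feasibility maps. Forward direction: given any feasible point of $\mathrm{P}4$, let $\beta_{E,m}$ (resp. $\beta_{T,m}$) be the fraction of $E_{all}$ (resp. $T_{all}$) consumed by task $m$, absorbing the fixed motion cost and any residual slack so that these coefficients are nonnegative and sum to one; then the per-task block of each $\mathrm{P}7$ is satisfied, with the same per-task objective values. Reverse direction: given feasible points of all $M$ instances of $\mathrm{P}7$ under coefficients with $\sum_m\beta_{E,m}=\sum_m\beta_{T,m}=1$, concatenate the task-wise variable blocks; since the blocks act on disjoint devices, the capacity constraints of $\mathrm{P}4$ hold task by task, and summing the per-task energy and time inequalities over $m$ recovers the global budget constraints of $\mathrm{P}4$. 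Hence the two problems have identical feasible objective vectors $(\theta_{1,m}\alpha_m^{\theta_{2,m}}+\theta_{3,m})_{m=1}^{M}$, up to the choice of split.

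From the two maps I would conclude the decomposition identity: the optimal value of $\mathrm{P}4$ equals $\max_{\boldsymbol\beta}\,\min_m\,\mathrm{opt}\big(\mathrm{P}7_m(\boldsymbol\beta)\big)$, the maximum being over all valid splits. Because the per-task value function is monotone and continuous in its budget share and the set of valid splits is compact, this maximum is attained at some $\boldsymbol\beta^{\diamond}$, and fixing $\{\beta_{E,m},\beta_{T,m}\}=\boldsymbol\beta^{\diamond}$ makes $\mathrm{P}4$ and $\mathrm{P}7$ equivalent, which is the claim. Alternatively, one may simply read off $\boldsymbol\beta^{\diamond}$ directly from an optimal solution of $\mathrm{P}4$ and verify equivalence by the forward and reverse maps applied to that point.

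The step I expect to be the main obstacle is the bookkeeping of the fixed motion terms $\epsilon\,\Theta(\mathbf{E}^{\diamond})$ and $\mathrm{Tr}(\mathbf{D}^{T}\mathbf{E}^{\diamond})/v$: since they are not attributable to any single task, I must show that the split coefficients can simultaneously be normalized to sum to one \emph{and} leave each per-task instance of $\mathrm{P}7$ feasible. This amounts to exhibiting enough slack at the $\mathrm{P}4$-optimum, or, when the budgets are tight, distributing the motion terms so that at least the bottleneck task remains feasible with equality. Verifying this accounting, together with the continuity of the per-task value function (which needs a mild constraint-qualification such as the existence of a strictly feasible $(t_{u,j},p_{u,j})$), is the delicate part; once it is in place, the remainder is a routine separable-optimization argument.
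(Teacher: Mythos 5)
Your route is genuinely different from the paper's, and the difference matters. The paper's own proof is a short one-directional argument: it treats $\{\beta_{E,m},\beta_{T,m}\}$ as given constants in $(0,1)$ and observes that, because $\beta_{E,m}E_{all}\leq E_{all}$ and $\beta_{T,m}T_{all}\leq T_{all}$, every feasible point of $\mathrm{P}7$ automatically satisfies the global budget constraints of $\mathrm{P}4$; it then asserts that the $\mathrm{P}7$-optimum is therefore optimal for $\mathrm{P}4$. It never constructs the split, never proves the converse containment, and never touches the motion-term accounting. You instead attempt a full two-directional budget-splitting equivalence, which is the more ambitious (and more honest) thing to try.

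The concrete gap in your proposal is that your reverse direction does not match $\mathrm{P}7$ as written. The energy and time constraints of $\mathrm{P}7$ sum over \emph{all} devices $u=1,\dots,U$, not over $u\in\bigcup_{c}\mathcal{G}_{c,m}$; the index $m$ enters only through the right-hand sides $\beta_{E,m}E_{all}$ and $\beta_{T,m}T_{all}$. So the step ``summing the per-task energy and time inequalities over $m$ recovers the global budget'' is not available: summing those constraints over $m$ produces $M$ copies of the total consumption on the left against a single $E_{all}$ on the right. Conversely, your forward map breaks for the same reason: an optimal solution of $\mathrm{P}4$ whose total consumption approaches $E_{all}$ cannot satisfy $\sum_{u,j}t_{u,j}p_{u,j}+\epsilon\Theta(\mathbf{E}^{\diamond})\leq\beta_{E,m}E_{all}$ simultaneously for every $m$ once $M\geq 2$, since $\min_m\beta_{E,m}<1$. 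If you instead reinterpret the constraints as per-task sums (restricting to the devices of task $m$), your decomposition becomes viable in outline, but then the fixed terms $\epsilon\Theta(\mathbf{E}^{\diamond})$ and $\mathrm{Tr}(\mathbf{D}^{T}\mathbf{E}^{\diamond})/v$ are counted $M$ times upon re-aggregation --- exactly the bookkeeping obstacle you flag, and it is not resolved in your sketch (nor in the paper, which sidesteps it by proving only one inclusion). As it stands, neither your forward map nor your reverse map closes against the problem the lemma actually states.
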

\begin{proof}
Assume an optimal solution $\left\{t^{\diamond}_{u,j},p^{\diamond}_{u,j}\right\}$ to $\mathrm{P}7$ with a particular $(a,b)$ such that $t^{\diamond}_{a,b}=t^{\vartriangle}$ and $p^{\diamond}_{a,b}=p^{\vartriangle}$. As $(t_{a,b},p_{a,b})=(t^{\vartriangle},p^{\vartriangle})$ is optimal to $\mathrm{P}7$, it must satisfy the constraints (28c) and (28d) of $\mathrm{P}7$, that is:
\begin{align}
&t^{\vartriangle}p^{\vartriangle}+C_1\leq\beta_{E,m}E_{all}, \quad \forall m, \\ \nonumber
&t^{\vartriangle}+C_2\leq\beta_{T,m}T_{all}, \quad \forall m,
\end{align}
where
\begin{align}
C_1=&\mathop{\sum\sum}\limits_{(u,j)\neq (a,b)}t_{u,j}p_{u,j}+\epsilon\Theta(\mathbf{E}^{\diamond}),\\ \nonumber
C_2=&\mathop{\sum\sum}\limits_{(u,j)\neq (a,b)}t_{u,j}+\frac{\mathrm{Tr}(\mathbf{D}^T\mathbf{E}^{\diamond})}{v}.\nonumber
\end{align}

Since $0<\beta_{E,m}<1$ and $0<\beta_{T,m}<1$ for any $m$, we have:
\begin{align}
&t^{\vartriangle}p^{\vartriangle}+C_1\leq E_{all}, \quad \forall m, \\ \nonumber
&t^{\vartriangle}+C_2\leq T_{all}, \quad \forall m.
\end{align}
That means the constraints (21c) and (21d) in $\mathrm{P}4$ are hold. Moreover, assuming the optimal solutions $\left\{\alpha^{\diamond}_{m}\right\}$ to $\mathrm{P}7$ such that $\left\{\alpha^{\diamond}_{m}\right\}=\left\{\alpha_m^{\vartriangle}\right\}$, then we have
\begin{align}
{\theta}_{1,m}(\alpha_m^{\vartriangle})^{{\theta}_{2,m}}+{\theta}_{3,m}\geq {\theta}_{1,m}\alpha_m^{{\theta}_{2,m}}+{\theta}_{3,m}, \quad \forall m,
\end{align}
which leads to
\begin{align}
\mathop{\mathrm{min}}_{ m}\Big({\theta}_{1,m}(\alpha_m^{\vartriangle})^{{\theta}_{2,m}}+{\theta}_{3,m}\Big)\geq \mathop{\mathrm{min}}_{m}\Big({\theta}_{1,m}\alpha_m^{{\theta}_{2,m}}+{\theta}_{3,m}\Big).
\end{align}
That means the objective function of $\mathrm{P}4$ holds. Meanwhile, as the constraints (21a) and (21e) of $\mathrm{P}4$ are the same as the constrains (28a) and (28e) of $\mathrm{P}7$, they always hold with the optimal solutions $\left\{\alpha_m^{\diamond},t_{a,b}^{\diamond},p_{a,b}^{\diamond}\right\}$ of $\mathrm{P}7$. In summary, the optimal solutions $\left\{\alpha_m^{\diamond},t_{u,j}^{\diamond},p_{u,j}^{\diamond}\right\}$ of $\mathrm{P}7$ are also optimal to $\mathrm{P}4$, which means $\mathrm{P}4$ can be equivalently converted into $\mathrm{P}7$.
\end{proof}

\subsection {Proof of Part (i)}
Then, we can prove this theorem based on $\mathrm{P}7$. We first prove the first part. Since the optimal solution $\left\{\alpha_{m}\right\}$ must activate the constraint (28a) of $\mathrm{P}7$, which leads to:
\begin{align}
\alpha_{m}=\sum_{u\in\mathcal{G}_{c,m}}\sum_{j=1}^J{t_{{u},{j}}}B\Big/A\mathrm{log}_2\left(1+s^{\diamond}_jF_{{u},{j}}{p_{{u},{j}}}\right), \quad \forall c,m
\end{align}
then $\mathrm{P}7$ can be re-expressed as:
\begin{subequations}
\begin{align}
\mathrm{P}8:
\mathop{\mathrm{max}}\limits_{\substack{\{\alpha_m\}\\\{t_{{u},{j}},p_{{u},{j}}\}}}
\quad&\Big({\theta}_{3,m}+{\theta}_{1,m}\Big(\sum_{u\in\mathcal{G}_{c,m}}\sum_{j=1}^Jt_{u,j}B/A
\nonumber\\
&\mathrm{log}_2\Big(1+s^{\diamond}_j F_{{u},{j}}{p_{{u},{j}}}\Big)\Big)^{{\theta}_{2,m}}\Big),\nonumber\\
\mathrm{s.t.}\quad\quad&(28b)-(28e).
\end{align}
\end{subequations}

Assume an optimal solution $\{t^{\diamond}_{{u}, {j}}, p^{\diamond}_{{u},{j}}\}$ to $\mathrm{P}8$ with a particular $(v,w)$ such that $t^{\diamond}_{{v}, {w}}=\hat{t}\neq 0$ and the corresponding $p^{\diamond}_{{v}, {w}}=0$, where user $v$ stores samples of class $c$ of task $m$. Consider the following related problem by fixing all the variables to their optimal values except for $(t_{{v}, {w}}, p_{{v}, {w}})$:
\begin{subequations}
\begin{align}
\mathrm{P}9:
\mathop{\mathrm{max}}\limits_{t_{{v}, {w}},p_{{v}, {w}}\geq 0}
\quad&\Big({\theta}_{1,m}\Big(A_1+t_{{v}, {w}}B/A\times \mathrm{log}_2\Big(1\nonumber\\
&+s^{\diamond}_wF_{{v},{w}}{p_{{v},{w}}}\Big)\Big)^{{\theta}_{2,m}}+{\theta}_{3,m}\Big),\nonumber\\
\mathrm{s.t.}\quad\quad&A_{2}+t_{{v},{w}}p_{{v},{w}}\leq \beta_{E,m}E_{all}, \quad \forall m\\
&A_{3}+t_{{v},{w}}\leq \beta_{T,m}T_{all}, \quad \forall m\\
&\sum_{m=1}^M\beta_{E,m}=1, \quad \sum_{m=1}^M\beta_{T,m}=1,
\end{align}
\end{subequations}
where
\begin{align}
A_{1}=&\mathop{\sum\sum}\limits_{(u,j)\neq (v,w)} t^{\diamond}_{{u},{j}}B/A\times \mathrm{log}_2\left(1+s^{\diamond}_j F_{{u},{j}}p^{\diamond}_{{u},{j}}\right),\nonumber\\
A_{2}=&\mathop{\sum\sum}\limits_{(u,j)\neq (v,w)}t^{\diamond}_{{u},{j}}p^{\diamond}_{{u},{j}}+\epsilon\Theta(\mathbf{E}^{\diamond}),\nonumber\\
A_{3}=&\mathop{\sum\sum}\limits_{(u,j)\neq (v,w)} t^{\diamond}_{{u},{j}}+{Tr(\mathbf{D}^T\mathbf{E}^{\diamond})}/{v}.\nonumber
\end{align}

As $\{t^{\diamond}_{{u}, {j}}, p^{\diamond}_{{u},{j}}\}$ is optimal to $\mathrm{P}8$, it can be shown that $(t_{{v}, {w}},p_{{v}, {w}})=(\hat{t},0)$ is optimal to $\mathrm{P}9$. Then, putting $(\hat{t},0)$ into the objective function of $\mathrm{P}9$, we have ${\theta}_{1,m}{A_{1}}^{{\theta}_{2,m}}+{\theta}_{3,m}$.
Considering another solution $(t_{{v}, {w}},p_{{v}, {w}})=(\tilde{t},\tilde{p})$ and putting it into the same objective function, we have
${{\theta}_{1,m}}\left(A_{1}+\tilde{t}B/A\times\mathrm{log}_2\left(1+s^{\diamond}_wF_{{v},{w}}\tilde{p}
\right)\right)^{{\theta}_{2,m}}+{{\theta}_{3,m}}$.
Due to $(t_{{v}, {w}}, \delta_{{v}, {w}})=(\hat{t},0)$ is of optimal to $\mathrm{P}9$, we can obtain the following inequality:
\begin{align}
A_{1}+\tilde{t}B/A\times \mathrm{log}_2\left(1+s^{\diamond}_{w}F_{{v},{w}}\tilde{p}\right)<A_{1}.
\end{align}

But since $s^{\diamond}_wF_{{v},{w}}\neq 0$ (as $t^{\diamond}_{{v},{w}}\neq 0$, $s^{\diamond}_w=1$ from (28e)), we have $\tilde{t}\times \mathrm{log}_2\left(1+s^{\diamond}_{w}F_{{v},{w}}\tilde{p}\right)>0$, which leads to $A_{1}+\tilde{t}B/A\times \mathrm{log}_2\left(1+s^{\diamond}_{w}F_{{v},{w}}\tilde{p}\right)>A_{1}$, and it contradicts to (37), that is to say $(t_{{v}, {w}},p_{{v}, {w}})=(\hat{t},0)$ cannot be optimal to $\mathrm{P}8$. Therefore, $p^{\diamond}_{{v}, {w}}\neq0$. Next, we prove the second part.

\subsection {Proof of Part (ii)}
To prove (ii), we also assume $\{t^{\diamond}_{{u}, {j}}, p^{\diamond}_{{u},{j}}\}$ as an optimal solution to $\mathrm{P}8$. Suppose that there exists user $v$ stored samples of class $c$ for task $m$ such that $t^{\diamond}_{{v},{w}}=0$ at vertex $w=argmax_{l\in\mathcal{J}}{s}^{\diamond}_{l}F_{{u},{l}}$, then there must exist $w^{'}\neq w$ such that $t^{\diamond}_{{v},{w'}}=\tilde{t}\neq 0$ and $p^{\diamond}_{{v},{w'}}=\tilde{p}\neq 0$.

Construct a new related problem of $\mathrm{P}8$ by fixing all the variables to their optimal values except for $(t_{{v},{w}},p_{{v},{w}},t_{{v},{w'}},p_{{v},{w'}})$:
\begin{subequations}
\begin{align}
\mathop{\mathrm{max}}\limits_{\substack{t_{{v},{w}},p_{{v},{w}}\geq 0 \\ t_{{v},{w^{'}}},p_{{v},{w^{'}}}\geq 0}}
\quad
&\Big({{\theta}_{1,m}}\Big(A_{11}+t_{{v},{w}}B/A\times  \mathrm{log}_2\Big(1 \nonumber\\
&+s^{\diamond}_wF_{{v},{w}}p_{{v},{w}}\Big)+t_{{v},{w'}}B/A\nonumber\\
&\times \mathrm{log}_2\Big(1+s^{\diamond}_{w^{'}}F_{{v},{w'}}p_{{v},{w'}} \Big)\Big)^{{\theta}_{2,m}}\nonumber\\
&+{{\theta}_{3,m}}\Big)\nonumber\\
\mathrm{s.t.}\quad &A_{12}+t_{{v},{w}}+t_{{v},{w'}}\leq \beta_{T,m}T_{all}, \quad \forall m\\
&A_{13}+t_{{v},{w}}p_{{v},{w}}+t_{{v},{w'}}p_{{v},{w'}}\nonumber\\
&\leq \beta_{E,m}E_{all}, \quad \forall m\\
&\sum_{m=1}^M\beta_{T,m}=1, \quad\sum_{m=1}^M\beta_{E,m}=1,
\end{align}
\end{subequations}
where
\begin{align}
A_{11}=&\mathop{\sum\sum}\limits_{\substack{(u,j)\not\in \{(v,w),(v,w')\}}}t^{\diamond}_{{u},{j}}B/A\times \mathrm{log}_2\left(1+s^{\diamond}_jF_{{u},{j}}p^{\diamond}_{{u},{j}}\right), \nonumber \\
A_{12}=&\mathop{\sum\sum}\limits_{(u,j)\not\in \{(v,w),(v,w')\}} t^{\diamond}_{{u},{j}}+{Tr(\mathbf{D}^T\mathbf{E^{\diamond}})}/{v},\nonumber \\
A_{13}=&\mathop{\sum\sum}\limits_{\substack{(u,j)  \not\in \{(v,w),(v,w')\}}}t^{\diamond}_{{u},{j}}p^{\diamond}_{{u},{j}}+\epsilon\Theta(\mathbf{E}^{\diamond}).\nonumber
\end{align}

Furthermore, under $\{t^{\diamond}_{{u}, {j}}, p^{\diamond}_{{u},{j}}\}$ with $t^{\diamond}_{{v},{w}}=0$, it can be seen that $(t_{{v},{w}},p_{{v},{w}},t_{{v},{w'}},p_{{v},{w'}})=(0,p^{\diamond}_{{v},{w}},\tilde{t},\tilde{p})$ is optimal to problem in (38), then, $(0,p^{\diamond}_{{v},{w}},\tilde{t},\tilde{p})$ must satisfy constraints (38a) and (38b), that is:
\begin{equation}
\quad A_{12}+\tilde{t}\leqslant \beta_{T,m}T_{all},\quad A_{13}+\tilde{t}\tilde{p}\leqslant \beta_{E,m}E_{all}.
\end{equation}

Comparing (39) with (38a) and (38b), it is evident that $(t_{{v},{w}},p_{{v},{w}},t_{{v},{w'}},p_{{v},{w'}})=(\tilde{t},\tilde{p},0,p^{\diamond}_{{v},{w'}})$ is also feasible for the two constraints. Then, substituting the two solutions $(t_{{v},{w}},p_{{v},{w}},t_{{v},{w'}},p_{{v},{w'}})=(0,p^{\diamond}_{{v},{w}},\tilde{t},\tilde{p})$ and $(t_{{v},{w}},p_{{v},{w}},t_{{v},{w'}},p_{{v},{w'}})=(\tilde{t},\tilde{p},0,p^{\diamond}_{{v},{w'}})$ into the objective function of problem in (38), we can obtain:
\begin{equation}
{{\theta}_{1,m}}\left( A_{11}+\tilde{t}B/A\times \mathrm{log}_2\left(1+s^{\diamond}_{w'}F_{{v},{w'}}\tilde{p}\right) \right)^{{\theta}_{2,m}}+{{\theta}_{3,m}},
\end{equation}
and
\begin{equation}
{{\theta}_{1,m}}\left( A_{11}+\tilde{t}B/A\times \mathrm{log}_2\left(1+s^{\diamond}_{w} F_{{v},{w}}\tilde{p}\right) \right)^{{\theta}_{2,m}}+{{\theta}_{3,m}}.
\end{equation}

Since $(t_{{v},{w}},p_{{v},{w}},t_{{v},{w'}},p_{{v},{w'}})=(0,p^{\diamond}_{{v},{w}},\tilde{t},\tilde{p})$ is optimal to problem in (38), the following inequality is obtained from (40) and (41):
\begin{equation}
\mathrm{log}_2\left(1+s^{\diamond}_{w'}F_{{v},{w'}}\tilde{p}\right)-\mathrm{log}_2\left(1+s^{\diamond}_{w}F_{{v},{w}}\tilde{p}\right)>0.
\end{equation}
By using Jensens inequality, we have
\begin{align}
&\mathrm{log}_2\left(s^{\diamond}_{w'}F_{{v},{w'}}\tilde{p}-s^{\diamond}_{w} F_{{v},{w}}\tilde{p}\right)\nonumber\\
&>\mathrm{log}_2\left(1+s^{\diamond}_{w'} F_{{v},{w'}}\tilde{p}\right)-\mathrm{log}_2\left(1+s^{\diamond}_{w} F_{{v},{w}}\tilde{p}\right)>0,
\end{align}
which leads to $s^{\diamond}_{w'}F_{{v},{w'}}\tilde{p}>s^{\diamond}_{w}F_{{v},{w}}\tilde{p}$ and subsequently results in $s^{\diamond}_{w'}F_{{v},{w'}}>s^{\diamond}_{w}F_{{v},{w}}$, but it contradicts to $w=argmax_{l\in\mathcal{J}}{s}^{\diamond}_{l}F_{{u},{l}}$. Consequently, $t^{\diamond}_{{v},{w}}\neq 0$ at vertex $w=argmax_{l\in\mathcal{J}}{s}^{\diamond}_{l}F_{{u},{l}}$.

\subsection {Proof of Part (iii)}

To prove part (iii), we assume that there exists user $v$ stored samples of class $c$ for task $m$, such that $t^{\diamond}_{{v},{w'}}\neq 0$ at vertex $w' \neq argmax_{l\in\mathcal{J}}{s}^{\diamond}_{l}F_{{u},{l}}$. In addition, based on the part A of this Appendix, we have $t^{\diamond}_{{v},{w}}\neq 0$ at vertex $w= argmax_{l\in\mathcal{J}}{s}^{\diamond}_{l}F_{{u},{l}}$. Accordingly, given part (i) of \textbf{Theorem 1}, we also have $p^{\diamond}_{{v},{w'}}p^{\diamond}_{{v},{w}}\neq 0$.

Then, the partial Lagrangian of $\mathrm{P}7$ corresponding to $(t_{{u},{j}},p_{{u},{j}})$ under fixed $\mathbf{s}=\mathbf{s^{\diamond}}$ and $\mathbf{E}=\mathbf{E^{\diamond}}$ can be established as
\begin{equation}
    \begin{aligned}
&\emph{L}\Big( \left\{t_{{u},{j}},p_{{u},{j}} \right\} \left\{{\varepsilon}_{c,m},\kappa, \chi, {\varsigma}_{{u},{j}}, {\tau}_{{u},{j}}, {\xi}_{{u},{j}}\right\} \Big)\\
&=\Big({\theta}_{1,m}\alpha_m^{{\theta}_{2,m}}+{\theta}_{3,m}\Big)+{\varepsilon}_{c,m}\Big(\alpha_m-\sum_{u\in\mathcal{G}_{c,m}}\sum_{j=1}^{J}t_{{u},{j}} \\
& B/A\times\mathrm{log}_2\Big(1+s^{\diamond}_jF_{{u},{j}}p_{{u},{j}}\Big)\Big)
+\kappa\Big(\sum_{u=1}^U\sum_{j=1}^Jt_{{u},{j}}p_{{u},{j}}\\
&+\epsilon\Theta(\mathbf{E}^{\diamond})-\beta_{E,m}E_{all}\Big)+\chi\Big(\sum_{u=1}^U\sum_{j=1}^Jt_{{u},{j}}-\beta_{T,m}T_{all}\\
&+{Tr(\mathbf{D}^T\mathbf{E^{\diamond}})}/{v}\Big)+\sum_{u=1}^U\sum_{j=1}^J{\varsigma}_{{u},{j}}\Big(\Big(1-s^{\diamond}_{j}\Big)t_{{u},{j}}\Big)\\
&-\sum_{u=1}^U\sum_{j=1}^J{\xi}_{{u},{j}}t_{{u},{j}}-\sum_{u=1}^U\sum_{j=1}^J{\tau}_{{u},{j}}p_{{u},{j}},
     \end{aligned}
\end{equation}
where $\left\{{\varepsilon}_{c,m},\kappa, \chi, {\varsigma}_{{u},{j}}, {\tau}_{{u},{j}}, {\xi}_{{u},{j}}\right\}$ are Lagrange multipliers. Because $\mathrm{P}7$ is convex in $\{{t}^{\diamond}_{{u},{j}}\}$ with fixed $\{{p}^{\diamond}_{{u},{j}}\}$ and vice visa, based on Karush-Kuhn-Tucker condition \cite{kkt}, the optimal primal-dual point $\{{t}^{\diamond}_{{u},{j}},{p}^{\diamond}_{{u},{j}}\}$ and $\left\{{\varepsilon}^{\diamond}_{c,m},\kappa^{\diamond}, \chi^{\diamond}, {\varsigma}^{\diamond}_{{u},{j}}, {\tau}^{\diamond}_{{u},{j}}, {\xi}^{\diamond}_{{u},{j}}\right\}$ must satisfy:
\begin{subequations}
\begin{align}
&{\tau}^{\diamond}_{{u},{j}}{p}^{\diamond}_{{u},{j}}=0,\quad {\xi}^{\diamond}_{{u},{j}}{t}^{\diamond}_{{u},{j}}=0,\quad \forall u,j\\
&{{\varepsilon}^{\diamond}_{c,m}}B/A\times log_2\left(1+s^{\diamond}_jF_{{u},{j}}p^{\diamond}_{{u},{j}}\right)+{\xi}^{\diamond}_{{u},{j}} \nonumber\\
&=\kappa^{\diamond}{p}^{\diamond}_{{u},{j}}+{\varsigma}^{\diamond}(1-s^{\diamond}_{j})+{\chi}^{\diamond}, \forall j\in\mathcal{J}, \quad \forall u\in\mathcal{G}_{c,m}\\
&-{{\varepsilon}^{\diamond}_{c,m}}\left(\frac{B{t}^{\diamond}_{{u},{j}}s^{\diamond}_j F_{{u},{j}}}{A\left(1+s^{\diamond}_j\times F_{{u},{j}}{p}^{\diamond}_{{u},{j}} \right)ln2}\right)+\kappa^{\diamond}{t}^{\diamond}_{{u},{j}}\nonumber\\
&-{\tau}^{\diamond}_{{u},{j}}=0, \forall j\in\mathcal{J}, \quad \forall u\in\mathcal{G}_{c,m}.
\end{align}
\end{subequations}

Putting ${t}^{\diamond}_{{v},{w'}},{t}^{\diamond}_{{v},{w}},{p}^{\diamond}_{{v},{w'}},{p}^{\diamond}_{{v},{w}}$ into (45a), it follows that
\begin{equation}
{\tau}^{\diamond}_{{v},{w'}}={\tau}^{\diamond}_{{v},{w}}={\xi}^{\diamond}_{{v},{w'}}={\xi}^{\diamond}_{{v},{w}}=0.
\end{equation}
Plugging ${\tau}^{\diamond}_{{v},{w'}}=0$ from (46) into (45c), the ${p}^{\diamond}_{{v},{w'}}$ can be obtained by
\begin{equation}
{p}^{\diamond}_{{v},{w'}}=\frac{{{\varepsilon}^{\diamond}_{c,m}}B}{A\kappa^{\diamond}ln2}-\frac{1}{s^{\diamond}_{w'}{F}_{{v},{w'}}}.
\end{equation}
Since when ${t}^{\diamond}_{{v},{w'}}\neq 0$, ${s}^{\diamond}_{w'}= 1$ from (28e), putting ${\xi}^{\diamond}_{{v},{w'}}=0$ from (46) and ${s}^{\diamond}_{w'}= 1$ into (45b), the following formula can be obtained
\begin{equation}
{{{\varepsilon}^{\diamond}_{c,m}}B/A\times}log_2\left({F}_{{v},{w'}}\frac{{{\varepsilon}^{\diamond}_{c,m}}B}{A\kappa^{\diamond}ln2}\right)-\frac{{{\varepsilon}^{\diamond}_{c,m}}B}{Aln2}+\frac{\kappa^{\diamond}}{{F}_{{v},{w'}}}=\chi^{\diamond}.
\end{equation}
With simple manipulations, (48) can be re-expressed as $F\left({s}^{\diamond}_{w'}{F}_{{v},{w'}}\right)=\chi^{\diamond}$,
where
\begin{equation}
F(x)={{\varepsilon}^{\diamond}_{c,m}}B/A\times log_2\left(x\frac{{{\varepsilon}^{\diamond}_{c,m}}B}{A\kappa^{\diamond}ln2}\right)-\frac{{{\varepsilon}^{\diamond}_{c,m}}B}{Aln2}+\frac{\kappa^{\diamond}}{x},
\end{equation}
with $x\neq 0$ (as ${s}^{\diamond}_{w'}= 1, {s}^{\diamond}_{w'}{F}_{{v},{w'}}\neq 0$).

With a similar procedure, by using ${\tau}^{\diamond}_{{v},{w}}={\xi}^{\diamond}_{{v},{w}}=0$ from (46), we have $F\left({s}^{\diamond}_{w'}{F}_{{v},{w}}\right)=\chi^{\diamond}$. As a results,
\begin{equation}
F\left({s}^{\diamond}_{w'}{F}_{{v},{w'}}\right)=F\left({s}^{\diamond}_{w}{F}_{{v},{w}}\right).
\end{equation}

Deriving the first-order derivative of $F(x)$ and combining with (47), we have
\begin{equation}
\nabla_x(F)=\frac{1}{x}\left(p^{\diamond}_{{v},{w'}}\kappa^{\diamond}+\kappa^{\diamond}\left(\frac{1}{{s}^{\diamond}_{w'}{F}_{{v},{w'}}}-\frac{1}{x}\right)\right).
\end{equation}

Since $p^{\diamond}_{{v},{w'}}>0$, it is clear that $\nabla_x(F)>0$ for any $x\in[{s}^{\diamond}_{w'}{F}_{{v},{w'}},{s}^{\diamond}_{w}{F}_{{v},{w}}]$ and $F(x)$ is a monotonically increasing function of $x$ in the interval. By combining the conclusion with (50), we can obtain that
${s}^{\diamond}_{w'}{F}_{{v},{w'}}={s}^{\diamond}_{w}{F}_{{v},{w}}=argmax_{l\in\mathcal{J}}{s}^{\diamond}_{l}F_{{u},{l}}$. But this contradicts to $w' \neq argmax_{l\in\mathcal{J}}{s}^{\diamond}_{l}F_{{u},{l}}$. Consequently, $t^{\diamond}_{{v},{w'}}= 0$ at vertex $w' \neq argmax_{l\in\mathcal{J}}{s}^{\diamond}_{l}F_{{u},{l}}$.

\section{Proof of Proposition 1}

In this appendix, we prove the joint-convexity of problem $\mathrm{P}6$. Since $\nabla^{2}f(\alpha_m)\prec 0, \forall \alpha_m>0$, the objective function ${\theta}_{1,m}\alpha_m^{{\theta}_{2,m}}+{\theta}_{3,m}$ is concave with respect to $\alpha_m$.

To verify the convexity of the constraint (25a), we first define a new affine function as:
\begin{align}\label{4}
f_1(\delta_{{u}})=e_{{u}}+\delta_{u}M_u(s^{\diamond}).
\end{align}
Since $log_2\left({1}/{f_1(\delta_{u})}\right)$ is a convex function of $\delta_{u}$ \cite{kkt}, then its perspective function $e_{u}\times B/A\mathrm{log}_2\left({e_{u}}/{f_1}(\delta_{{u}}) \right)$ is convex with respective to $e_{{u}}$ and ${\delta}_{{u}}$, and consequently
(25a) is jointly convex with respect to $\alpha_m$, $e_{{u}}$ and ${\delta}_{{u}}$ as its Hessian matrix is positive semi-definite.

Besides, since constraints (25b) and (25c) are affine, both of them are convex. Therefore, the objective function is concave and all the constraints are convex, constituting a joint convex optimization problem $\mathrm{P}6$.

\end{document}